\newtheorem{definition}{Definition}
\newtheorem{theorem}{Theorem}
\newtheorem{lemma}{Lemma}
\newtheorem{claim}{\rm \emph{Claim}}
\newtheorem{proposition}{Proposition}
\newcommand{\qed}{\hfill $\Box$ \bigbreak}
\newenvironment{proof}{\noindent{\bf Proof.~}}{\qed}
\newcommand{\m}[1]{{\cal{#1}}} 
\newcommand{\opt}{\mbox{\sc opt}}
\newcommand{\nca}{\mbox{\rm nca}}
\newcommand{\apex}{\mbox{\rm apex}}
\newcommand{\lightsize}{\mbox{\rm lsize}}
\newcommand{\ID}{\mbox{\sc id}}
\newcommand{\ST}{\mbox{\sc st}}
\newcommand{\MDST}{\mbox{near-\sc mdst}}
\newcommand{\NP}{\mbox{\rm NP}}
\newcommand{\coNP}{\mbox{\rm co-NP}}
\newcommand{\ch}{\mbox{\rm child}}
\newcommand{\dist}{{\rm dist}}
\newenvironment{smallitemize} {
 \begin{list}{$-$} {\setlength{\parsep}{0pt}
\setlength{\itemsep}{0pt}} } { \end{list} }
\newcommand{\AlgoT}{\mbox{\bf{tree\&{}lead}}}
\newcommand{\AlgoA}{\mbox{\bf{switch}}}
\newcommand{\AlgoP}{\mbox{\bf{permute}}}
\newcommand{\Root}{\mbox{\sf r}}
\newcommand{\p}{\mbox{\sf p}}
\newcommand{\level}{\mbox{\sf d}}
\newcommand{\F}{\mbox{\sf f}}
\newcommand{\size}{\mbox{\sf s}}
\newcommand{\switch}{\mbox{\sf switch}}
\newcommand{\up}{\mbox{\sf up}}
\newcommand{\TRoot}{{\tt Root}}
\newcommand{\TNode}{{\tt Node}}
\newcommand{\ErT}{{\tt ErrorForest}}
\newcommand{\TUp}{{\tt Up}}
\newcommand{\TDown}{{\tt Down}}
\newcommand{\UpWave}{{\tt UpWave}}
\newcommand{\ErL}{{\tt ErrorSwitch}}
\newcommand{\Tsize}{{\rm Size}}
\newcommand{\Best}{{\rm Best}}
\newcommand{\R}{\mathbb{R}}
\begin{document}

\title{Polynomial-Time  Space-Optimal Silent Self-Stabilizing \\ Minimum-Degree Spanning Tree Construction}

\date{}

\author{
L\'elia Blin\thanks{Additional supports from the ANR project IRIS. }\\
{\small Laboratory LIP6-UPMC}\\
{\small University of Evry-Val d'Essonne}\\
{\small France}
\and
Pierre Fraigniaud\thanks{Additional supports from the ANR project DISPLEXITY, and from the INRIA project GANG.}\\
{\small Laboratory LIAFA}\\
{\small CNRS and Univ. Paris Diderot}\\
{\small France}
}
 
\maketitle

\begin{abstract}
Motivated by applications to sensor networks, as well as to many other areas,  this paper studies the construction of minimum-degree spanning trees. We consider the classical node-register state model, with a weakly fair scheduler, and we present a space-optimal \emph{silent} self-stabilizing construction of minimum-degree spanning trees in this model. Computing a spanning tree with minimum degree is NP-hard. Therefore, we actually focus on constructing a spanning tree whose degree is within one from the optimal. Our algorithm uses registers on $O(\log n)$ bits, converges in a polynomial number of rounds, and performs polynomial-time computation at each node. Specifically, the algorithm constructs and stabilizes on a special class of spanning trees, with degree at most $\opt+1$. Indeed, we prove that, unless $\NP=\coNP$, there are no proof-labeling schemes involving polynomial-time computation at each node for the whole family of  spanning trees with degree at most $\opt+1$. Up to our knowledge, this is the first example of the design of a compact silent self-stabilizing algorithm constructing,  and stabilizing on a subset of optimal solutions to a natural problem for which there are no time-efficient proof-labeling schemes. On our way to design our algorithm, we establish a set of independent results that may have interest on their own. In particular, we describe a new space-optimal silent self-stabilizing spanning tree construction, stabilizing on \emph{any} spanning tree, in $O(n)$ rounds, and using just \emph{one} additional bit compared to the size of the labels used to certify trees.  We also design a silent loop-free  self-stabilizing algorithm for transforming a tree into another tree. Last but not least, we provide a silent  self-stabilizing algorithm for computing and certifying the labels of a NCA-labeling scheme. 
\end{abstract}

\thispagestyle{empty}
\setcounter{page}{0}
\newpage

\section{Introduction} 

\subsection{Context and objective}

Self-stabilization~\cite{Dol00} deals with the design and analysis of distributed algorithms in which processes are subject to \emph{transient} failures. The main objective of self-stabilization is to evaluate the capacity for an asynchronous distributed system to recover from a transient fault, that is, to measure the ability of the system to return to a legal  state starting from an \emph{arbitrary}  state, and to remain in legal  states whenever starting from a legal  state. The legality of a  state is a notion that depends on the task to be solved, like, e.g.,  for leader election, the presence of a unique leader. 

One desirable property for a self-stabilizing algorithm is to be \emph{silent}~\cite{DGS99}, that is, to keep the individual state of each process unchanged once a legal (global) state has been reached. Silentness is a desirable property as it guarantees that self-stabilization does not burden the system with extra traffic between processes whenever the system is in a legal  state. Designing silent algorithms is difficult because one must insure that each process is able to decide \emph{locally} of the legality of a (global) state of the system, based solely on its own individual state, and on the individual states of its neighbors. This difficulty becomes prominent when one takes into account an important complexity measure for self-stabilizing algorithms: \emph{space complexity}, i.e., the amount of memory used at each process to store its variables~\cite{BGJ99,DGS99}. Keeping the memory space limited at each process reduces the potential corruption of the memory, and enables to maintain several redundant copies of variables (e.g., for fault-tolerance) without hurting the efficiency of the system. Moreover, in the classical \emph{node-register state model} (i.e., the model used in this paper), keeping the space complexity small insures that reading  variables in registers of neighboring processes does not consume too much bandwidth. 

In this paper, we are interested in the design of silent self-stabilizing algorithms with small space complexity in the node-register state model (with a classical weakly fair scheduler). More specifically, we focus our attention on the design of self-stabilizing algorithms for the construction of \emph{spanning trees} in networks. In this context, each process is a node of a graph $G$, and the nodes communicate along the edges of $G$. For instance, in the node-register state model, every node has read/write access to its own variables, and read-only access to the variables of its neighbors in~$G$. The objective is to compute a spanning tree $T$ of $G$. Typically, the tree $T$ is rooted at some node~$r$, and it is distributedly encoded at each node $v$ by the identify of  $v$'s parent $p(v)$ in $T$. (The root $r$ has $p(r)=\bot$).

There is a huge literature on the self-stabilizing construction of various kinds of trees, including spanning trees (ST)~\cite{Cournier2009,KosowskiK2005},  breadth-first search (BFS) trees~\cite{AfekB1998,AfekKY1990,BurmanK2007,CournierRV2011,DolevIM1993,HuangC1992,Johnen1997}, depth-first search (DFS) trees~\cite{CollinD1994,CournierDV2009,CournierRV2011,HuangC1993}, minimum-weight spanning trees (MST)~\cite{BlinDPR10,BlinPRT2009,GuptaS2003,HighamL2001,KKM11}, shortest-path spanning trees~\cite{GuptaBS2000,Huang2005}, minimum-diameter spanning trees~\cite{ButelleLB1995}, minimum-degree spanning trees~\cite{BlinGR2011}, etc.  Some of these constructions are even silent, with optimal  space-complexity. This is for instance the case of several BFS constructions under different kinds of schedulers~\cite{AfekKY1990,BurmanK2007,CournierRV2011,HuangC1992}, and of the ST constructions in~\cite{Cournier2009,KosowskiK2005}. All these latter constructions insure silentness via the (potentially implicit) use of a mechanism known as \emph{proof-labeling scheme}~\cite{KKP10}. 

A proof-labeling scheme for a graph property $\m{P}$ assigns a \emph{label} to each node so that, given its own label, and the labels of its neighbors, each node can decide whether  $\m{P}$  holds of not. More precisely, if  $\m{P}$  holds, then all nodes must decide ``yes'', otherwise at least one node must decide ``no''. For instance, $\m{P}$ may be ``$T$ is a spanning tree of $G$'' for some subgraph $T$ of $G$ partially known at each node (e.g., using the aforementioned parent pointer $p(v)$ at each node $v$). There are known compact and time-efficient proof-labeling schemes for many different types of  trees, including ST~\cite{KKP10}, BFS~\cite{AfekKY1990}, and MST~\cite{KK07,KKP10}. Typically, in the context of self-stabilization, a node detecting some inconsistencies between its own label and the labels of its neighbors decides ``no'', and launches a recovery procedure~\cite{AKY97,APSV91}. This procedure must insure that the system returns to a legal  state, and, also, must recompute the appropriate labels so that to distributedly certify the legality of the newly computed  state. We address the silent and compact self-stabilizing construction of trees belonging to a family of trees for which it is  \emph{unlikely} that there exists a time-efficient proof-labeling scheme. 

More specifically, we  focus on the construction of \emph{minimum-degree} spanning trees. That is, we aim at designing an algorithm which, for any given (connected) graph $G$, constructs a spanning tree $T$ of $G$ whose degree\footnote{I.e., the maximum,  taken over all nodes $v$ of $T$, of the degree of $v$ in $T$.} is minimum among all spanning trees of $G$. Our interest for this problem is  motivated by resolving issues arising in the design of MAC protocols for sensor networks under the 802.15.4 specification, which we are currently investigating with our partners STElectronics and Thales, in the framework of the project IRIS~\cite{iris}. It is also worth pointing out that the minimum-degree spanning tree problem arises in many other contexts, including electrical circuits~\cite{NH80}, communication networks~\cite{F01}, as well as in many other areas \cite{G82,K02}. 

Since {\sc Hamiltonian-path} is NP-hard, we actually slightly relax  our task, by focussing on the construction of a spanning tree whose degree is within $+1$ from the minimum degree $\opt$ of any spanning tree in the given graph. (Designing an algorithm for constructing a spanning tree with degree $\opt$, involving polynomial-time computation at each node, and polynomially many rounds, is hopeless, unless $\mbox{P}=\NP$). 

Even if constructing a spanning tree with degree at most $\opt+1$ can be sequentially achieved in polynomial time~\cite{FR94}, designing even just a terminating distributed algorithm for this task is challenging. To see why, let $\MDST(G,T)$ be the predicate ``$T$ is a spanning tree of $G$ with degree at most ${\opt+1}$''. As we shall show in this paper, unless $\NP=\coNP$, there are \emph{no} proof-labeling schemes involving polynomial-time computation at each node for this predicate. This negative result led us to address the intriguing question of whether there exist a compact (i.e., logarithmic space) and time-efficient (i.e., polynomial  computation at each node, and polynomial number of rounds) silent self-stabilizing algorithm for the construction of spanning trees with degree within $+1$ from the optimal. Perhaps surprisingly, we answer positively to this latter question \emph{despite the lack of an efficient proof-labeling scheme for that class of trees}. 

\subsection{Our results}

We design (and prove correctness of) a silent self-stabilizing algorithm which constructs and maintains a spanning tree of the actual network, whose degree is within $+1$ from the minimum degree of any spanning tree of this network. The space complexity of the algorithm is $O(\log n)$ bits at each node in $n$-node networks, which is optimal as a direct consequence of~\cite{DGS99}. Starting from an arbitrary  state, the algorithm converges to a legal spanning tree (i.e., a spanning tree of degree at most $\opt +1$) in a polynomial number of rounds. Moreover, each step of the algorithm involves polynomial-time computation at every activated nodes. 

In fact, our algorithm constructs a special kind of trees, named \emph{FR-trees} after F\"urer and Raghava\-chari~\cite{FR94}. The algorithm, which is a distributed version of the algorithm in~\cite{FR94}, accepts only  those trees (see Fig.~\ref{fig:summary}). FR-trees are of  degree at most $\opt+1$, but the trees of  degree at most $\opt+1$ which are not FR-trees will be rejected by our algorithm. That is, even starting from an initial spanning tree  $T$ of degree $\leq \opt+1$, but different from a FR-tree, our algorithm  transforms $T$ into a FR-tree $T'\neq T$, and stabilizes on $T'$. This is because, as we show later, verifying whether a given tree is of degree $\leq \opt+1$ cannot be done in polynomial time, unless $\NP=\coNP$. Instead, there is a proof-labeling scheme for FR-trees using labels on $O(\log n)$ bits. 

\begin{figure}[tb]
\centerline{\includegraphics[width=10cm]{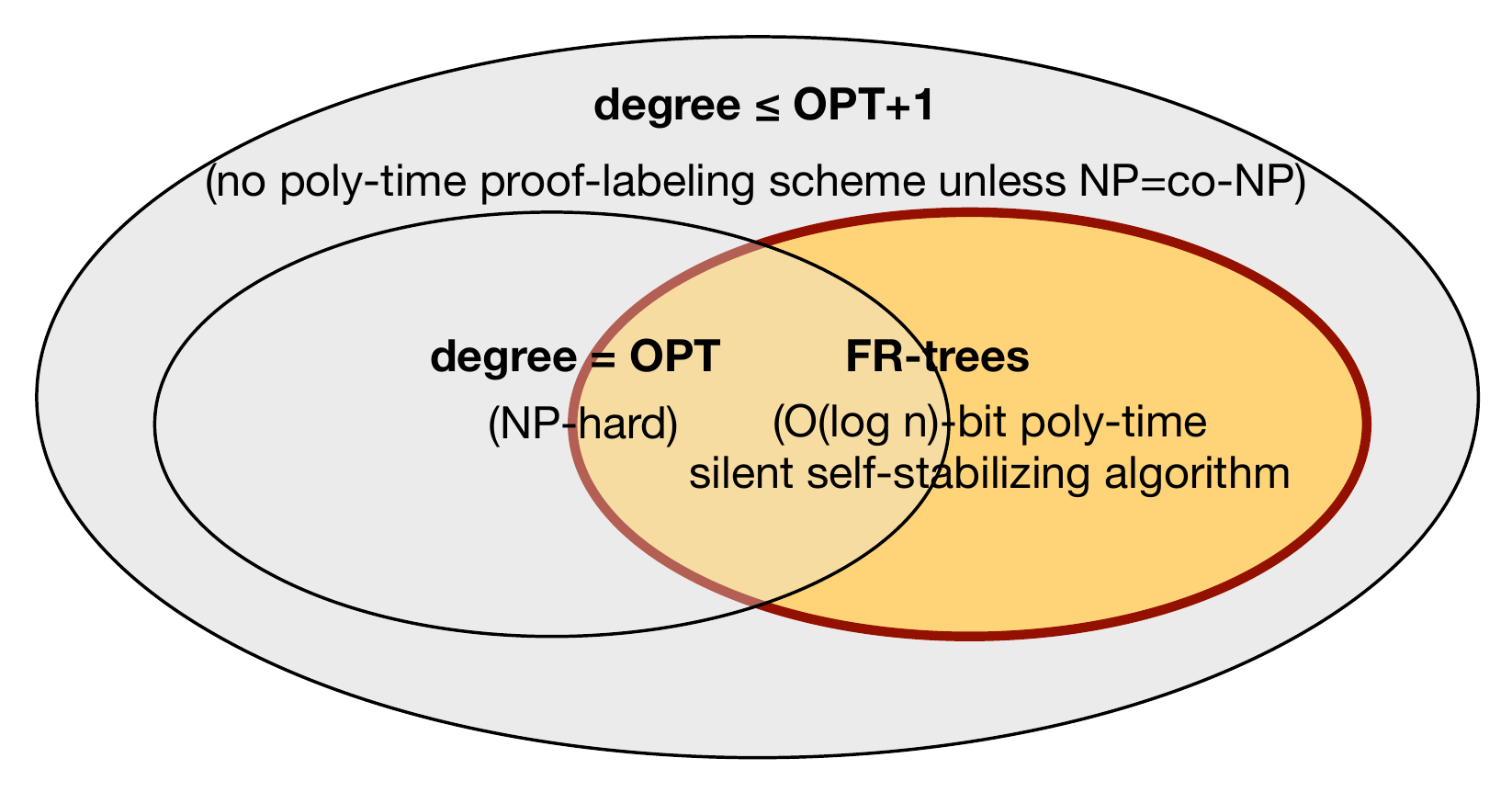}}
\caption{\sl Three classes of spanning trees. Our self-stabilizing algorithm constructs and stabilizes on FR-trees. The other two classes are hard for silent self-stabilization. }
\label{fig:summary}
\end{figure}

Constructing FR-trees is achieved using several techniques, which may have interest on their own. In particular, we present a novel silent self-stabilizing algorithm constructing, and stabilizing on \emph{any} spanning tree, converging in $O(n)$ rounds. In addition, the root of the tree becomes explicitly a leader. The algorithm is simple (only four rules), and elegant (it uses just \emph{one} additional bit compared to the size of the labels used to certify trees). We needed such an algorithm that accepts \emph{any} spanning tree because our minimum-degree spanning tree construction proceeds by successive improvements of the current tree, in order to decrease its degree, which may yield any tree-structures. Up to our knowledge, previous silent self-stabilizing spanning tree constructions either stabilize only on specific forms of trees (e.g., BFS), or are designed under the \emph{semi-uniform} model (i.e., assuming the presence of a leader). None of these constructions were thus well suited for contexts similar to ours,  in which the trees may evolve arbitrarily. 

We also present a silent \emph{loop-free} self-stabilizing algorithm for  transforming any tree into another tree (previous loop-free algorithms are not silent). Our silent loop-free algorithm is based on a redundant way to label the nodes of a tree so that one can update a spanning tree, say from $T_1$ to $T_2$, while preserving the ability to certify the spanning tree property at any point in time during the transformation. (Instead, classical proof-labeling schemes for trees requires to recompute the labels after the transformation is completed, and these labels may be incorrect during the transformation). 

Finally, our minimum-degree spanning tree algorithm  makes use of the informative-labeling scheme for nearest common ancestor (NCA) from~\cite{AGKR04}. Up to our knowledge, this is the first time that this very compact NCA-labeling scheme is used in the context of self-stabilization.  Moreover, using this scheme in a silent algorithm requires to design a proof-labeling scheme for it. It is probably the first occurrence of a proof-labeling scheme for an informative-labeling scheme!

\subsection{Other related work}

In addition to the aforementioned references, there is a series of contributions that are closely related to our work. In particular, several papers address the leader election task~\cite{AroraG1990,Awerbuch1993,BGJ99,DattaLV2011,DolevH1997}, which is inherently related to spanning tree construction. Regarding the sequential construction of minimum-degree spanning tree, the best known result is \cite{FR94} which describes a polynomial-time algorithm for constructing spanning as well as Steiner trees with degree at most $\opt+1$. Several generalizations of the minimum-degree spanning tree problem have been addressed, including the degree-bounded minimum-weight spanning tree problem~\cite{SL07}, and minimum-degree  spanning tree problem in digraphs~\cite{KKRR04}. 

\section{The computational model} 

In this paper, we are dealing with the \emph{state model} for self-stabilization, where each process is the node of an asynchronous network modeled as a simple connected graph $G=(V,E)$. Every node $v\in V$ has a distinct identity, denoted by $\ID(v)\in\{1,\dots,n^c\}$ for some constant $c\geq 1$, and is a state machine with state-set $S$ (the same for all nodes).  The machine has read/write access to a single-writer multiple-reader register which stores the current state of $v$. (The identity $\ID(v)$ of every node $v$ is a constant, which does not necessarily appear in the register of node $v$, unless $v$ explicitly writes it in there). In one atomic \emph{step}, every node can read its own register, and the registers of its neighbors in $G$. The transition function is a function $\delta: S^* \to S$ which, given any finite collection of states, returns a new state. At node $v$ in state $s_0$, and given the states $s_1,\dots,s_k$ of the $k$ neighbors of $v$ in $G$, the new state $s'_0$ of $v$ after one step  is $s'_0=\delta(s_0,\{s_1,\dots,s_k\})$. The network is asynchronous in the sense that nodes take step of computation (i.e., change state) in arbitrary order, under the control of a \emph{weakly fair} scheduler. That is, at each step, the scheduler must choose at least one of the enabled node (those for which the algorithm wants to take a step) under the constraint that every node enabled persistently must eventually take a step. 

A collection of $n$ individual registers-state in an $n$-node graph form a (global)  \emph{ state} of the system. A \emph{problem},  or a  \emph{task}, is specified by a set of   states, called \emph{legal} states. For instance, in the case of spanning tree construction, one wants each node $v$ to maintain a variable $p(v)$ storing either the identity of its parent, or $\bot$. A  state is legal if and only if the 1-factor defined by the set of (directed) edges
\[
\{(v,p(v)),v\in V\}
\] 
form a spanning tree of $G$. A \emph{fault} is the corruption of the register of one or more nodes in the system. After a fault has occurred, the system may be in an illegal  state. It is the role of the self-stabilizing algorithm to detect the illegality of the current  state, and to make sure that the system returns to a legal  state. In other words, starting from any  state, the system must eventually converge to a legal one, and must remain in legal ones. A self-stabilizing algorithm is \emph{silent} if and only if it converges to a  state where the values of the registers used by the algorithm remain fixed. Note that, for variables storing information whose size may vary depending on the structure of the network, like, typically, the identity of a leader, the corruption of that variable cannot result is storing a value with arbitrary large size. For instance, in an $n$-node network, a variable $X_v$ storing a node identity at node $v$ may be corrupted but this corruption can only result in having $X_v$ arbitrary in $[1,n^c]$ since this is what we assumed to be the range of valid identities in an $n$-node network.  

In this paper, we present a collection of algorithms whose combination will result in an algorithm constructing a minimum-degree spanning tree. For some of these algorithms, we even present an implementation of them. In this case, we adopt the classical way of describing implementations of self-stabilizing algorithms. Each node executes the same instruction set which consists in one or more \emph{rules} of the form: 
\[
\mbox{name-of-rule} : guard \longrightarrow command
\]
where $guard$ is a boolean predicate over the variables in the registers of the node and its neighbors, and $command$ is a  statement assigning new values to the variables of the node. An \emph{enabled}, or \emph{activatable}  node is a node for which at least one guard is true. Note that all implementations described in this paper satisfies that at most one guard is true at any node, at any point in time of the execution. 

Given a  state $\gamma$ of the system, let $A\in V$ be the set of nodes for which at least one guard of some algorithm {\tt Alg} is true. That is, $A$ is the set of activatable nodes in $\gamma$. A \emph{round} of an execution $\cal E$ of {\tt Alg} starting from $\gamma$ is the shortest  prefix of $\cal E$ in which each node in $A$ executes at least one step. If {\tt Alg} constructs and stabilizes on  states in some family $F$ of  states, then the round-complexity of {\tt Alg} is the maximum, taken over all initial states $\gamma$, and over all executions $\cal E$ of {\tt Alg} starting from~$\gamma$ and ending in a state $\gamma'\in F$, of the number of rounds in $\cal E$. The latter is the integer $k$ such that $\cal E$ can be decomposed in a sequence $\gamma_0=\gamma, \gamma_1, \dots,\gamma_{k}=\gamma'$ such that, for every $i=0,\dots,k-1$, the round of $\cal E$ starting from $\gamma_i$ ends in $\gamma_{i+1}$.  

\section{Minimum-degree spanning trees} 

\subsection{The minimum-degree spanning tree  (MDST) problem}

Given a tree $T$, the \emph{degree} $\deg(T)$ of $T$ is the maximum, taken over all nodes $v$ of $T$, of the degree of $v$ in $T$. Given a graph $G=(V,E)$, a spanning tree $T$ of $G$ is of minimum degree if there are no spanning trees $T'$ of $G$ with $\deg(T')<\deg(T)$. We denote by $\opt(G)$ the degree of any minimum-degree spanning tree of $G$.  Since deciding whether a graph is Hamiltonian is NP-hard, we get that deciding, given $G$ and $k\geq 0$, whether $\opt(G)\leq k$  is NP-hard. However, thanks to the Algorithm by F\"urer and Raghavachari~\cite{FR94}, given any graph $G$, one can construct a spanning tree $T$ of $G$ with $\deg(T)\leq \opt(G)+1$, in polynomial time. 

This paper describes a self-stabilizing distributed algorithm which, whenever running in a network~$G$, returns a \emph{\MDST} of $G$. That is, the algorithm returns a spanning tree $T$ of $G$ with $\deg(T)\leq \opt(G)+1$. This spanning tree is encoded distributedly as follows: it is rooted at an arbitrary node $r$,  and every node $v$ stores the identity $p(v)$ of its parent in $T$ (the root $r$ stores $p(r)=\bot$). When the algorithm stabilizes, the 1-factor $\{(v,p(v)),v\in V\}$ must be a \emph{\MDST} of the current network. 

\subsection{Proof-labeling scheme for minimum-degree spanning trees}
\label{subsec:pls4mdst}

A silent algorithm needs to detect locally, by having each node inspecting only its register and the registers of its neighbors, whether the current state is legal or not. A typical mechanism for doing so is \emph{proof-labeling scheme}. To every node is assigned a \emph{label} so that, given its own label, and the labels of its neighbors, each node can decide whether a certain property holds or not. If it holds, then all nodes must decide ``yes'', otherwise at least one node must decide ``no''. For instance, in the case of spanning tree construction, for any given (connected) graph $G$, let us define 
\[
\ST(G)=\{T:\mbox{$T$ is a spanning tree of $G$}\},
\]
The following proof-labeling scheme for $\ST$, based on a sequence of increasing integers is folklore (see \cite{IL94} for other certificates for cycle-freeness). The label $L(v)$ of node $v$ is a pair $(\ID,d)$ where $\ID$ is the identity of the root $r$ of $T$, and $d$ is the distance of $v$ to that root in $T$. Each node $v$ checks that its given root identity $\ID$ is identical to the root identity given to all its neighbors in $G$, and checks that the distance given to its parent $p(v)$ is one less than the distance $d$ given to it (the root $r$ check that $d=0$). We call this scheme \emph{distance-based}. It is easy to see that if $T$ is not a spanning tree of $G$, that is, if $T$ is not spanning all nodes of $G$, or if $T$ is not a tree ($T$ may be a forest, or $T$ may contain a cycle), then some inconsistencies will be detected at some nodes, for every given collection $\{L(v),v\in V(G)\}$ of labels. The distance-based scheme uses labels on $O(\log n)$ bits, and the verification performed at each node runs in polynomial time -- it merely consists of at most $k+1=O(n)$ integer comparisons at each node of degree~$k$. 

The situation is radically different for $\MDST$, where
\[
\MDST(G)=\{T:\mbox{$T$ is a spanning tree of $G$, and $\deg(T)\leq \opt(G)+1$}\}.
\]
Indeed, the following results shows that it is unlikely that there is a proof-labeling scheme for $\MDST$ using  labels of logarithmic size. 

\begin{proposition}\label{prop:noPLS}
Unless $\NP=\coNP$, there are no proof-labeling schemes for $\MDST$ involving $O(poly(n))$ computation time at each node of $n$-node graphs. Thus, in particular, unless $\NP=\coNP$, there are no proof-labeling schemes for $\MDST$ using labels of size $O(\log n)$ bits.
\end{proposition}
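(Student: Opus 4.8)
Proof proposal. The plan is to prove the contrapositive: from a proof-labeling scheme for $\MDST$ whose verification runs in $O(poly(n))$ time at each node, I will derive $\NP=\coNP$. Everything goes through the decision language $\m{D}=\{(G,T):T\in\MDST(G)\}$, which I will place in $\NP$ and show to be $\coNP$-hard. For membership, the $\NP$-witness for $(G,T)\in\m{D}$ is simply the family of labels $\{L(v):v\in V(G)\}$ produced by the scheme. Since the verifier at each node runs in polynomial time it inspects only polynomially many bits, so the whole family has polynomial size and can be checked in polynomial time by centrally simulating the node verifiers. Completeness of the scheme provides an accepting labeling exactly when $T\in\MDST(G)$, while soundness guarantees that whenever $T\notin\MDST(G)$ some node rejects \emph{every} labeling; hence $\m{D}\in\NP$.

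For hardness I would reduce from the co-Hamiltonian-path problem, which is $\coNP$-complete (assuming $G$ connected). For $n\ge 3$ a spanning tree of degree $2$ is exactly a Hamiltonian path, so $\opt(G)=2$ iff $G$ has a Hamiltonian path, and thus $\{G:\opt(G)\ge 3\}$ is $\coNP$-complete. The lever is the ``$+1$ slack'': a spanning tree $T$ of degree $d$ lies in $\MDST(G)$ iff $d\le\opt(G)+1$, i.e. iff $\opt(G)\ge d-1$; choosing $d=4$ turns membership of a degree-$4$ tree in $\MDST$ into the assertion $\opt(G)\ge 3$. Concretely, from an instance $G$ I would build a pair $(H,T)$ with $\deg(T)=4$ such that $\opt(H)=2$ exactly when $G$ has a Hamiltonian path; then $T\in\MDST(H)$ iff $\opt(H)\ge 3$ iff $G$ has \emph{no} Hamiltonian path, so the reduction $G\mapsto(H,T)$ witnesses that $\m{D}$ is $\coNP$-hard. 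Combined with $\m{D}\in\NP$ this yields $\coNP\subseteq\NP$, hence $\NP=\coNP$. The stated logarithmic-size consequence is then immediate: $O(\log n)$-bit labels are read in polynomial time, so any such scheme is in particular a polynomial-time scheme and is ruled out by the same argument.

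The step I expect to be the real obstacle is the construction of the gadget underlying $(H,T)$, because it must meet three requirements at once: (a) preserve whether $G$ has a Hamiltonian path, (b) force $\opt(H)=2$ precisely in the Hamiltonian case, and (c) guarantee that a degree-$4$ spanning tree $T$ actually exists to serve as the test tree. Requirement (c) is delicate exactly for bounded-degree inputs: if $G$ is, for instance, $3$-regular with no Hamiltonian path then $\opt(G)=3$ yet no spanning tree of degree $4$ exists, so the test tree cannot be produced without first raising the maximum degree. I would therefore attach a small $2$-connected degree-boosting gadget to $G$ through two edges, so that it can be threaded by a Hamiltonian path (protecting (a)) and cannot lower $\opt$ below the threshold (protecting (b)), while it supplies a vertex able to reach degree $4$ in a spanning tree (securing (c)). Checking that a single such gadget realises (a)--(c) simultaneously, and in particular that it introduces neither an accidental Hamiltonian path nor an unwanted lower-degree spanning tree, is where the careful case analysis will concentrate.
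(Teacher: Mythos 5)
Your first half is exactly the paper's argument: bound the label size by what the polynomial-time node verifier can read, concatenate the labels into a global polynomial-size certificate, simulate the node verifiers centrally, and conclude that deciding $T\in\MDST(G)$ is in $\NP$; and your ``$+1$ slack'' lever (a degree-$4$ test tree is in $\MDST$ iff $\opt\geq 3$ iff no Hamiltonian path) is also precisely the paper's. The genuine gap is the one you flag yourself: the hardness gadget is never constructed, and the route you sketch --- attaching a $2$-connected degree-boosting gadget through two edges to an \emph{arbitrary} instance $G$ --- is unlikely to be repairable as stated. If the gadget meets $G$ only at two attachment vertices $x,y$, then any Hamiltonian path of $H$ must either end inside the gadget (forcing a Hamiltonian path of $G$ with a \emph{prescribed endpoint} $x$ or $y$) or thread the gadget between $x$ and $y$ (forcing a Hamiltonian path of $G\cup\{xy\}$ that uses the virtual edge $xy$). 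Conversely, a Hamiltonian path of $G$ need not go anywhere near $x$ or $y$ in either fashion, so your requirement (a) fails in the dangerous direction: $G$ can have a Hamiltonian path while $\opt(H)\geq 3$, in which case your reduction accepts a ``yes''-instance of Hamiltonian path as a ``no''-instance. Since Hamiltonian path with prescribed endpoints is a different and harder-to-control constraint, no single two-edge attachment can satisfy (a), (b), (c) for all $G$ simultaneously; the case analysis you defer is not just delicate but, for this attachment scheme, unwinnable.

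The missing idea --- and how the paper closes exactly this gap --- is to not reduce from Hamiltonian path on arbitrary graphs at all, but to invoke its $\NP$-completeness on bounded-degree instances: grid graphs~\cite{IPS82}, hence graphs of maximum degree~$4$, after splitting one degree-$3$ vertex $u$ with neighbors $x,y,z$ into adjacent twins $u',u''$ both joined to $x,y,z$ (this preserves Hamiltonicity and manufactures a degree-$4$ vertex). In a connected graph of maximum degree~$4$ containing a degree-$4$ vertex, a spanning tree of degree \emph{exactly}~$4$ always exists --- extend the four star edges at that vertex to any spanning tree; the ambient degree bound caps the tree degree at~$4$ --- so your requirement (c) evaporates. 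Moreover, the paper does not even need a many-one reduction computing $T$: it places {\sc no-Hamiltonian-path} (on degree-$4$ graphs) directly in $\NP$ with certificate $(T,C)$, so the tree $T$ is \emph{guessed} as part of the witness and only its existence matters. With that substitution your argument goes through verbatim and coincides with the paper's proof of Proposition~\ref{prop:noPLS}.
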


\begin{proof}
Assume that  there exists a proof-labeling schemes for $\MDST$ involving polynomial-time computation at each node. Given a graph $G$, and a spanning tree $T$ with degree at most $\opt(G)+1$, the labels given by the scheme form a distributed certificate proving that $T\in \MDST(G)$. Since the verification scheme performs in $O(poly(n))$ time at each node, one can assume, w.l.o.g., that the labels are on $O(poly(n))$ bits since the verifier inspects $O(poly(n))$ bits  of the labels anyway. One can thus create a global certificate of polynomial size by concatenating all the $n$ labels. A sequential algorithm can then use this global certificate, and simulate the actions of each node. This sequential simulation amounts to polynomial time as well. Therefore, we get that deciding whether $T\in \MDST(G)$ is in NP. 

Let $\m{A}$ be a verification algorithm for $T\in \MDST(G)$. We now focus on the language {\sc no-Hamiltonian-path} consisting of all non-Hamiltonian graphs. Using the fact that deciding whether $T\in \MDST(G)$ is in NP, we show that {\sc no-Hamiltonian-path} is in NP for graphs of degree~4. Given a graph $G$ of degree~4 in $\mbox{\sc no-Hamiltonian-path}$, the certificate consists in an arbitrary spanning tree $T$ of $G$ with degree exactly~4, and the certificate $C$ used to prove that $T \in \MDST(G)$ using $\m{A}$. Since $G$ contains no Hamiltonian paths, we have $T$ of degree at most $\opt+1$, and therefore $\m{A}$ is accepting $T$ with certificate $C$. If $G\notin \mbox{\sc no-Hamiltonian-path}$, then $G$ contains a Hamiltonian path, and any spanning tree $T$ of degree~4 has degree larger than $\opt+1$. This leads $\m{A}$ to reject this instance for every given certificate $C$. Therefore, {\sc no-Hamiltonian-path} is in NP for graphs of degree~4. 

Now, {\sc Hamiltonian-path} is known to be NP-complete for grid graphs~\cite{IPS82}, hence for graphs with degree~3, and thus also for graphs with degree~4 (pick a node $u$ of degree~3 connected to $x,y$, and $z$, and replace $u$ by two nodes $u'$ and $u''$ connected by an edge, and both connected to $x,y$, and $z$). Therefore, {\sc no-Hamiltonian-path} is co-NP-complete for graphs with degree~4. Since we have just seen that $\mbox{\sc no-Hamiltonian-path}\in \NP$, we get that $\coNP\subseteq \NP$. Therefore, if there exists a proof-labeling schemes for $\MDST$ involving polynomial-time computation at each node, then $\coNP\subseteq \NP$, and thus $\NP=\coNP$.
\end{proof}

A direct consequence of the above result is that it is unlikely that there exists a silent, time-efficient self-stabilizing algorithm constructing and stabilizing on spanning trees with degrees at most $\opt+1$. Therefore, we shall now focus on constructing trees belonging to a subclass of spanning trees with degrees at most $\opt+1$ (see Fig.~\ref{fig:summary}).

\subsection{A subclass of minimum-degree spanning trees}

We define FR-trees, named after F\"urer and Raghavachari. 

\begin{definition}\label{def:FR-tree}
$T$ is a \emph{FR-tree} in a graph $G$ if $T$ is a degree-$k$ spanning tree of  $G$ whose every node can be marked ``good'' or ``bad'' such that the following three properties hold: 
(1)~every node with degree $k$ in $T$ is marked  bad,
(2)~every bad node is of degree at least $k-1$ in $T$, and 
(3)~there are no edges in $G$ between two good nodes in two different trees in the forest resulting from $T$ by removing the bad nodes (and their incident edges). 
\end{definition}

F\"urer and Raghavachari~\cite{FR94} have proved that every (connected) graph has a spanning tree that is a FR-tree (e.g., an Hamiltonian path in an Hamiltonian graph is a FR-tree by marking all nodes bad). Theorem 2.2 in~\cite{FR94} states that the degree $k$ of any FR-tree satisfies $k\leq \opt+1$. However, not all spanning trees of degree $\opt$ or $\opt+1$ are FR-trees, as exemplified on Fig.~\ref{fig:FRtree}. The algorithm of F\"urer and Raghavachari, of which we shall provide a distributed self-stabilizing implementation in the next section, produces a FR-tree. In order to keep the algorithm silent, we use a proof-labeling scheme for FR-trees. 

\begin{figure}[tb]
\centerline{\includegraphics[width=8cm]{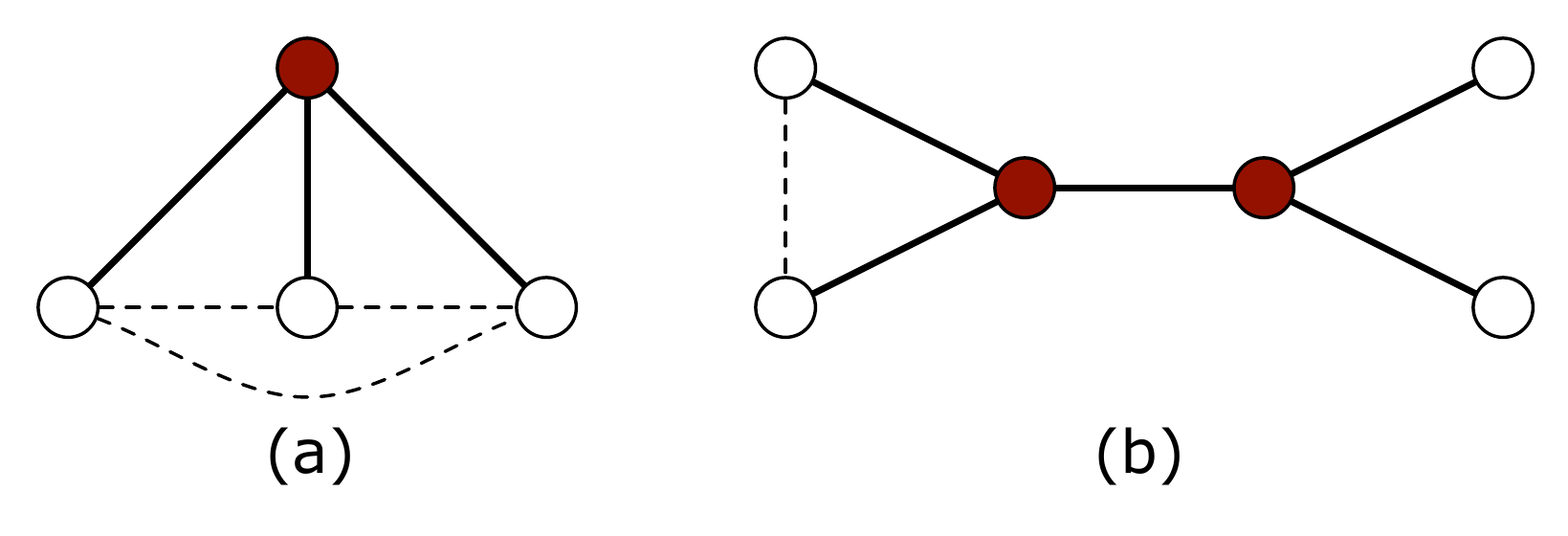}}
\caption{\sl (a) A spanning tree of degree $\opt+1$, and (b) a panning tree of degree $\opt$ (the plain edges are the edges of the trees, and the dotted edges are the remaining edges in the graph, i.e., not in the trees). None of these two trees are FR-trees. Indeed, the dark nodes must be bad while the white nodes must be good, and there are edges between good nodes separated by a bad node.}
\label{fig:FRtree}
\end{figure}

\begin{lemma}\label{lem:plsfrtree}
There is a proof-labeling scheme for FR-trees, using labels on $O(\log n)$ bits, and any silent self-stabilizing FR-tree construction algorithm requires registers of $\Omega(\log n)$ bits. 
\end{lemma}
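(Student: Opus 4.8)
The statement bundles two independent claims: the existence of a proof-labeling scheme (PLS) for FR-trees with $O(\log n)$-bit labels, and a matching $\Omega(\log n)$ lower bound on the register size of any silent self-stabilizing construction. I would treat them separately.

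For the PLS, the plan is to assemble one certificate out of four locally checkable pieces. First, reuse the distance-based scheme of Section~\ref{subsec:pls4mdst} to certify that $T$ is a spanning tree, giving each node a pair (root identity, distance to the root) on $O(\log n)$ bits. Second, certify the value $k=\deg(T)$: store $k$ at every node, let each node verify that its degree in $T$ (computable from its own parent pointer and those of its neighbors) is at most $k$, and add a pointer chain descending from the root to a single witness node of degree $k$, where the witness checks $\deg(v)=k$ and each chain node checks local consistency. This pins $k$ to the true maximum degree, since an inflated $k$ makes the witness check fail while a deflated $k$ makes some high-degree node reject. Third, include the claimed good/bad marking as one extra bit per node; conditions~(1) and~(2) of Definition~\ref{def:FR-tree} then become purely local degree comparisons against $k$.

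The delicate piece is condition~(3), and I expect it to be the main obstacle, since it quantifies over the global component structure of the forest obtained by deleting the bad nodes, and must remain sound against an adversary who relabels components to hide a forbidden edge. My plan is to certify that structure by a second, nested distance-based labeling: each good node stores a component identifier together with a within-component distance, subject to the rule that a good node whose parent is bad (or the global root, when good) is a component root and must set its identifier to its own identity and its distance to $0$, whereas a good node with a good parent copies the parent's identifier and increments the distance. Following parent pointers up through good nodes shows that, once these checks pass, the identifier of any good node is forced to equal the identity of the root of its actual component; hence two good nodes carry equal identifiers if and only if they lie in the same component. Condition~(3) is then the local test ``a good node rejects if some $G$-neighbor is good but carries a different component identifier''. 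Soundness holds because distinct component roots are each pinned to their own identity and cannot be merged, so no labeling can conceal a cross-component good-good edge; completeness is immediate from the honest marking and labels of a genuine FR-tree. All four pieces use $O(\log n)$ bits, yielding the scheme.

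For the lower bound, the plan is to appeal to the memory requirement of silent stabilization established in~\cite{DGS99}. A silent self-stabilizing FR-tree algorithm is in particular a silent self-stabilizing spanning-tree construction: upon stabilization its registers encode a spanning tree through the parent pointers and, by silentness, certify it locally. Since silently certifying a spanning tree already forces each node to distinguish $\Omega(n)$ possibilities---which neighbor is the parent, together with a consistent root identity---the register size must be $\Omega(\log n)$. This is exactly the bound of~\cite{DGS99}, matching the $O(\log n)$ construction above.
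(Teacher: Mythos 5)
Your upper-bound construction is essentially the paper's scheme: a certified spanning tree via the distance-based labels of Section~\ref{subsec:pls4mdst}, a replicated global value $k$, the marking bit turning conditions~(1) and~(2) of Definition~\ref{def:FR-tree} into local degree tests, and fragment identifiers pinned to fragment roots (a good node with a bad parent writes its own identity, a good node with a good parent copies its parent's identifier), with condition~(3) checked as ``reject a good $G$-neighbor carrying a different identifier''. Two of your deviations are cosmetic: the paper certifies the existence of a degree-$k$ node with an aggregate field $\ell$ (each internal node checks $\ell$ equals the max of its children's $\ell$-values and its own degree; the root checks $\ell=k$) where you use a root-to-witness pointer chain, and your within-fragment distance counter is redundant, since acyclicity is already guaranteed by the certified tree---which is exactly why the paper dispenses with it. There is, however, one genuine (if easily repaired) omission: you never require neighboring nodes to agree on $k$. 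Your soundness argument---``an inflated $k$ fails at the witness, a deflated $k$ fails at a high-degree node''---tacitly assumes a single global value. Without the agreement test, a lone node of degree $10$ may carry $k_v=10$ and mark itself bad while every other node carries $k=5$ with an honest degree-$5$ witness; each local check passes, yet the tree need not be an FR-tree for its true degree. The paper's verifier explicitly checks that every node has the same $k$-value as all its neighbors in $G$, and your scheme needs that check too.

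The lower-bound half has a real gap. First, your stated justification---that silent certification of a spanning tree ``forces each node to distinguish $\Omega(n)$ possibilities---which neighbor is the parent, together with a consistent root identity''---is not a proof and is not the argument of~\cite{DGS99}: a parent pointer can be encoded as a port number on $O(\log\Delta)$ bits, and nothing in the task specification forces a node to store a root identity; the $\Omega(\log n)$ bound in~\cite{DGS99} comes from a cut-and-paste indistinguishability argument on specific instances (their Fig.~5), not from counting what must be stored. Second, your reduction ``an FR-tree algorithm is in particular a spanning-tree algorithm, hence Theorem~6.1 of~\cite{DGS99} applies'' cannot be invoked wholesale: that proof starts from silent legitimate configurations of the algorithm on specific trees and splices their registers, and an algorithm whose legal states encode only FR-trees is entitled never to stabilize into those configurations unless the specific trees used in the construction are themselves FR-trees. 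This verification is precisely the step the paper supplies: it marks the ``bottom line'' nodes of the Fig.~5 trees bad and the ``apex nodes'' good, observes that the resulting marked tree satisfies Definition~\ref{def:FR-tree}, and only then applies the argument of Theorem~6.1 directly. Without that check (or some substitute argument that the hard instances remain hard for the restricted legal set), your transfer of the $\Omega(\log n)$ bound is unsupported.
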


\begin{proof}
Let $T$ be a FR-tree in $G$. Root $T$ at an arbitrary node $r$. Assume a marking of the nodes of $T$ by ``good'' or ``bad'', witnessing of the fact that $T$ is a FR-tree. The label $L(v)$ of a node $v\in V$ has six fields: 
$
L(v)=(b,d,\ID_r,\ID_f,k,\ell)
$
where each entry is defined as follows. The first entry $b$ is a boolean which is true at node $v$ if and only if $v$ is bad. The second entry $d$ is an integer defined as the distance of $v$ from $r$ in $T$. The third entry, $\ID_r$,  is the identity of the root $r$ of $T$.  Removing the bad nodes from the tree $T$ results in a forest of \emph{fragments} composed of good nodes. Each fragment has a root, defined as the closest node to $r$ in the fragment (distances are computed in $T$). The fourth entry, $\ID_f$, is the identity of the root of the fragment $v$ belongs to, and is called the identity of the fragment. Finally, the integer $k$ is the degree of $T$, and the integer $\ell$ is used to verify that there is a node of degree $k$. 

The verification procedure performs as follows. First, it checks that $T$ is a tree using the aforementioned folklore distance-based method (see Section~\ref{subsec:pls4mdst}). That is, every node $u$ checks that it has the same field $\ID_r$ as its neighbors in $G$. The root $r$ (i.e., the node with $p(r)=\bot$) checks that $\ID(r)=\ID_r$. Moreover, every node $u$ checks that its parent has its distance value equal to $d-1$, and the root $r$ checks that $d=0$. If these tests are passed, then $T$ is necessarily a tree. To verify that this tree is a FR-tree, every node checks that it has the same degree-value $k$ has its neighbors in $G$, and that it has degree at most $k$ in $T$. Moreover, every bad node (i.e., every node with $b=1$) checks that it has degree at least $k-1$ in $T$, and every node of degree $k$ in $T$ checks that it is a bad node. At this point, it remains to check that there is a node with degree $k$ in $T$. This is the role of the last entry $\ell$. A leaf node checks that it satisfies $\ell=1$. Every internal node $u$ checks that its value $\ell$ is equal to the max of the $\ell$-values of its children and its own degree in $T$. The root $r$ verifies that $\ell=k$. It remains to check the last item of Definition~\ref{def:FR-tree}. For this purpose, every node $u$ checks the following with its parent $v=p(u)$. First, if both $u$ and $v$ are good, then they must have the same values of $\ID_f$. If $u$ is bad, or if $u$ is good and $v$ is bad, then $u$ checks that $\ID_f=\ID(u)$. Finally, every good node $u$ checks that each of its neighbors in $G$ is either in the same fragment (it has the same $\ID_f$-value as $u$), or is bad. 

By construction, every node passes all these tests if and only if $T$ is a FR-tree. 

The fact that any silent self-stabilizing FR-tree construction algorithm requires registers of $\Omega(\log n)$ bits is a direct consequence of Theorem~6.1 in~\cite{DGS99}. More specifically, the proof of this latter theorem must be slightly modified in order to manipulate FR-trees. For this purpose, we consider the trees in Fig.~5 in~\cite{DGS99}, and mark all nodes of the ``bottom line'' as bad, and mark all the ``apex nodes'' as good. The resulting tree is a FR-tree. The arguments used in the proof of  Theorem~6.1 in~\cite{DGS99} can then be applied directly. 
\end{proof}

\subsection{The F\"urer and Raghavachari algorithm}

In this section, we recall the algorithm by F\"urer and Raghavachari in~\cite{FR94}  to compute a FR-tree. This algorithm is iterative. Starting from an arbitrary spanning tree $T$, it aims at iteratively reducing the degree of $T$, and proceeds that way until getting to a FR-tree. The code of this (sequential) algorithm can be found in Algorithm~\ref{algo:FR}.  The essential operations performed in this algorithm are Instructions~\ref{ins:12}-\ref{ins:13}, which are aiming at reducing by~1 the degree of a node with maximum degree. See Fig.~\ref{fig:FR} for an illustration of the execution of the Algorithm on different graphs. It was proved in~\cite{FR94} that Algorithm~\ref{algo:FR} produces a FR-tree (and therefore a tree with degree at most $\opt+1$), in time $O(mn\log(n)\alpha(m,n))$ in $n$-node $m$-edge graphs, where $\alpha$ is the inverse Ackerman function. In the remaining of this paper, we shall show that this algorithm can be efficiently implemented in a silent self-stabilizing manner in the state model of distributed computation. That is, our algorithm computes a FR-tree, and stabilizes on FR-trees.

\begin{algorithm}[htb]
\caption{F\"urer and Raghavachari Algorithm (taken from~\cite{FR94})}
\label{algo:FR}
\begin{algorithmic}[1]
\Require{a connected graph $G=(V,E)$}
\State{find a spanning tree $T$ of $G$, and let $k$ be its degree}\label{ins:1}
\Repeat
\State{mark vertices of degree $k$ and $k-1$ as bad, and all other vertices as good}
\State{let $F$ be the set of fragments resulting from removing bad nodes from $T$}\label{ins:4}
\While{there is $e\in E$ between two different fragments \textbf{and} all degree-$k$ vertices are bad}\label{ins:5}
\State{find the bad vertices in the cycle $C$ in $T\cup\{e\}$, and mark them good}\label{ins:6}
\State{update $F$}\label{ins:7}
\EndWhile\label{ins:8}
\State{done $\leftarrow$ there exist no vertices of degree $k$ marked as good}\label{ins:9}
\If{not done} \label{ins:10}
\State{let $w$ be a vertex of degree $k$ marked as good}
\State{find a sequence of improvements which propagate to $w$}\label{ins:12}
\State{update $T$ and $k$ according to this sequence}\label{ins:13}
\EndIf \label{ins:14}
\Until{done}
\State{output $T$ with its vertices marked good or bad}
\end{algorithmic}
\end{algorithm}

\begin{figure}[tb]
\centerline{\includegraphics[width=11cm]{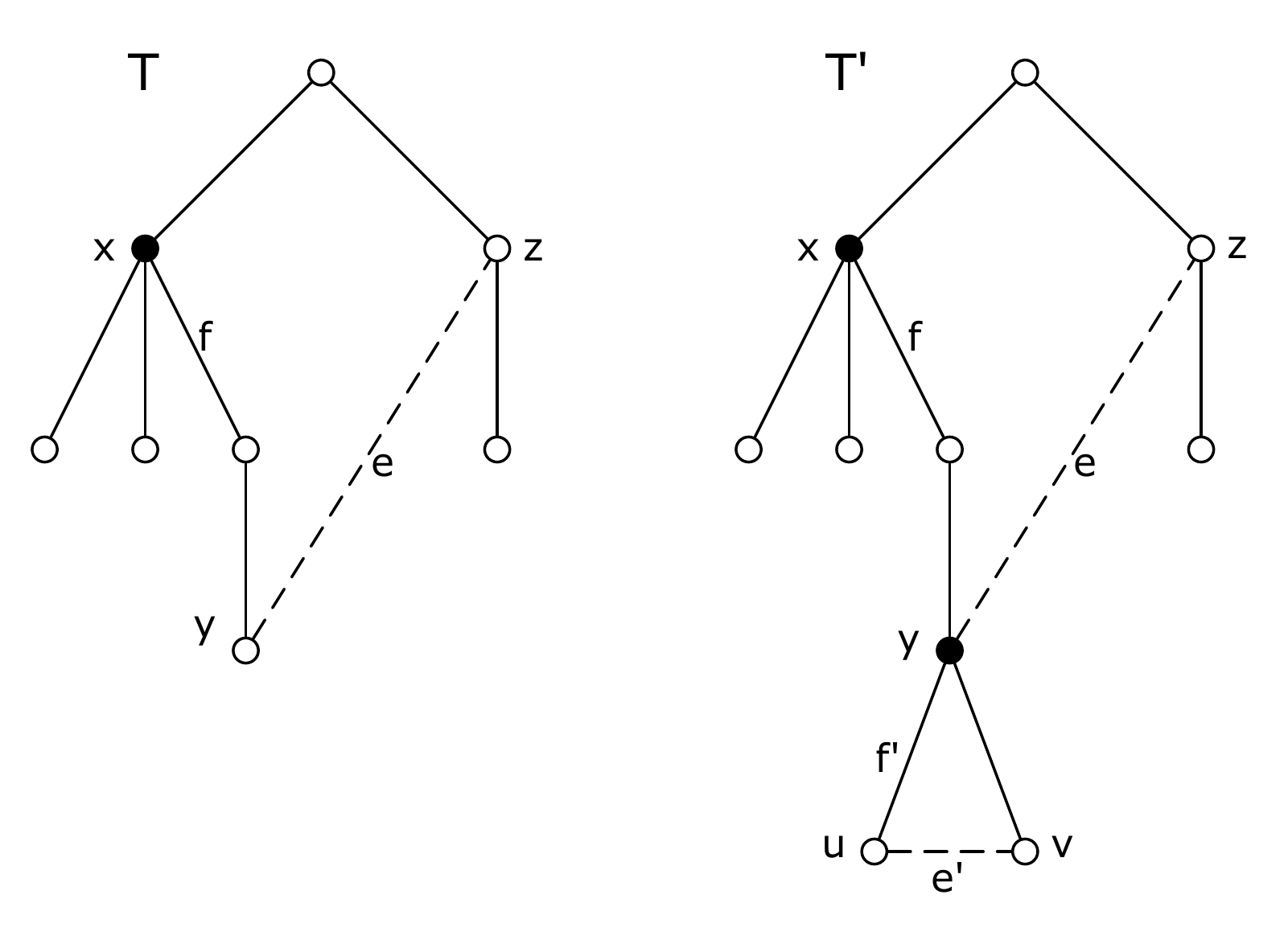}}
\caption{\sl An illustration of the algorithm of F\"urer and Raghavachari. In the left tree $T$, there is only one node of maximum degree $k=4$, and no nodes of degree $k-1=3$. Hence, only node $x$ is marked bad. The non-tree edge $e=\{y,z\}$ connects two fragments. Replacing $f$ by $e$ in $T$ enables to reduce the degree of $x$ by~1, and thus to reduce the degree of the tree $T$ by~1. A slightly more evolved example is depicted on the right hand side. In the right tree $T'$, there is one node of maximum degree $k=4$, and one node of degree $k-1=3$. Hence, both nodes $x$ and $y$ are marked bad. Replacing $f$ by $e$ in $T'$ would decrease the degree of $x$ down to~3, but increase the degree of $y$ up to~4. The algorithm hence identifies first the non-tree edge $e'$ as connecting two different fragments, and turns $y$ to good. This enables to identify the edge $e$ as connecting two different fragments. The sequence of improvement thus consists in: first, replacing $f'$ by $e'$, and then $f$ by $e$. This sequence yields a new tree, with degree~3 instead of~4. Note that the algorithm waits for the full identification of the sequence of improvement before executing these improvements. The while-loop (Instruction~\ref{ins:5}) looks for such a sequence by switching  nodes from bad to good until one node of maximum degree $k$ is switched to good. Then Instruction~\ref{ins:12} constructs the sequence of improvements enabling to decrease the degree of a degree-$k$ node by~1, and Instruction~\ref{ins:13} executes the sequence, and updates the tree accordingly.}
\label{fig:FR}
\end{figure}

\section{The silent self-stabilizing algorithm for MDST} 

As we mentioned before, our algorithm is a silent self-stabilizing implementation of Algorithm~\ref{algo:FR}. It is based on three main ingredients closely related to the instruction of this latter algorithm. Instruction~\ref{ins:1} requires to construct a spanning tree, and,  in the course of  actions performed by the algorithm, Instruction~\ref{ins:13} requires to update the current tree (in order either to decrease the number of nodes with maximum degree, or  to reduce the maximum degree). Thus, it is crucial that our spanning tree construction enforces no constraints on the structure of the current tree. Hence, our first ingredient is the design of a silent self-stabilizing spanning tree construction which stabilizes on \emph{any} spanning tree.  

Our second ingredient is also related to Instruction~\ref{ins:13} of Algorithm~\ref{algo:FR}. This instruction requires to perform a sequence of permutations between a tree edge and a non-tree edge in a fundamental cycle of the current tree. The basic mechanism to perform such an permutation is to replace the parent $w=p(v)$ of a node $v$ by another node $w'$. The node $w'$ will of course be such that this replacement does not disconnect the tree (see Fig.~\ref{fig:pruned}). The  permutation is performed by applying a sequence of such local replacements (see Fig.~\ref{fig:pruned-iterated}). In order to perform the replacement loop-free (i.e., to always make sure that the current data structure is a tree), and in a silent manner, we introduce a \emph{redundant} proof-labeling scheme for spanning trees. We prove that this redundant scheme contains enough redundancy for being pruned (i.e., some entries in the scheme can be turned to $\bot$ without hurting the certification). This robustness is then proved to be sufficient for implementing our loop-free silent updates. 

Our third ingredient is related to Instruction~\ref{ins:6} in Algorithm~\ref{algo:FR}. Given a non-tree edge $e=\{u,v\}$ in the current tree $T$, this instruction requires to identify the cycle $C$ formed by $e$ and the path from $u$ to $v$ in $T$. In order to perform this identification locally, we use the informative-labeling scheme for nearest common ancestor (NCA) described in~\cite{AGKR04}. As in any informative-labeling scheme, every node $v$ is provided with a \emph{label} $L(v)$. In NCA-labeling scheme, given the labels $L(u)$ and $L(v)$ of two nodes $u$ and $v$, one can compute the label $L(w)$ of the NCA of $u$ and $v$, $L(w)=\nca(L(u),L(v))$. Using such a scheme, and given $L(u)$, $L(v)$, and $L(w)$, every node can detect whether it belongs to the cycle $C$ or not: $x\in C$ if and only if $\nca(L(x),L(u))=L(x)$ and $\nca(L(x),L(v))=L(w)$, or $\nca(L(x),L(u))=L(w)$ and $\nca(L(x),L(v))=L(x)$. Coming up with a self-stabilizing construction of the labels in the NCA-labeling scheme of~\cite{AGKR04} is easy (actually, \cite{AGKR04} already proposed a distributed algorithm for computing the labels). However, this is not sufficient as, for our algorithm to be silent, we also need to certify the labels. For this purpose, we show how to construct a proof-labeling scheme of the NCA-labeling scheme. 

Having the above three ingredients at hand, we show how to derive a silent self-stabilizing implementation of Algorithm~\ref{algo:FR}. Sections~\ref{subsec:tool1}, \ref{subsec:tool2}, and~\ref{subsec:tool3} respectively describe each of our three ingredient. Finally, Section~\ref{subsec:tool4} describes the implementation of Algorithm~\ref{algo:FR} using these ingredients. 

\subsection{Spanning tree construction and leader election} 
\label{subsec:tool1}

For constructing a spanning tree, we root the tree at the node with minimum identity. This latter node explicitly becomes a \emph{leader}.  Our algorithm is called $\AlgoT$, for ``spanning tree and leader election''. It proceeds essentially in two phases. 

During a first phase, the nodes aim at constructing a spanning forest, i.e., a collection of trees such that each node belongs to at least one tree. During a second phase, the forests are merged into a single spanning tree. Hence, the algorithm uses a boolean predicate $\ErT(v)$ at each node $v$ such that $\ErT(v)$ is true at every node $v$ if and only if the current 1-factor $\{\{v,p(v)\}, v\in V\}$  forms a forest. In other words, this predicate implements a proof-labeling scheme for forests. This scheme is based on the classical distance-to-root method. If an error is detected at a node $v$, that is, if $\ErT(v)$ is true, then the current 1-factor is not a forest. In that case, $v$ aims at restarting every node in the subtree $T_v$ rooted at $v$. (If $v$ has detected the presence of a cycle, then it removes its pointer to its parent, and becomes the root of a subtree).  However, before restarting a node, $v$ launches a procedure whose role  is to ``freeze'' the nodes in the subtree. This freezing process proceeds downward the tree. Freezing nodes allows to avoid creating a cascade of errors, by having, e.g., a node reconnecting itself to a descendent  when it is restarted. When a leaf receives the freezing instruction, it freezes, and then restarts. The restarts then proceed upward the tree, and when a node restarts, it is not allowed to connect to a frozen node. By doing so, all nodes satisfy $\ErT(v)=$ false  after $O(n)$ rounds. Note that some tree in the forest may consist of a single node. To be fully correct, each tree of the forest must actually be rooted  at the node with minimum identity in the tree. This node identity becomes the identity of the tree. 

During the second phase, lasting an additional $O(n)$ rounds, the parent pointers are successively improved in parallel at every node, so that to converge from a forest to a single spanning tree. More precisely, every node $v$  selects as its parent its neighboring node belonging to the tree with smallest identity among all trees spanning the  neighbors of $v$. 

Algorithm $\AlgoT$ is actually tight enough that we can present here a complete implementation of it, and prove the correctness of the implementation. We denote by $N(v)$  the open neighborhood of node $v$ in the graph $G=(V,E)$, i.e., 
$
N(v)=\{u\in V \mid \{u,v\}\in E\}.
$
For a node $v$, we denote by $\ch(v)$  the set of children of $v$ in the current tree. That is, if $T$ is defined by the 1-factor $\{(v,p(v)), v\in V\}$, then we have 
$
\ch(v)=\{u\in V : p(u) = v\}.
$
Our algorithm for constructing a spanning tree uses only the following four variables at each node $v$ (each variable $x$ can also be equal to $\bot$, when it has not yet an assigned value): 
\begin{smallitemize}
\item $\Root_v \in \mathbb{N}$ is the identity of the root of the tree containing $v$ in a current spanning forest;
\item $\p_v \in  \mathbb{N}$ is the identity of the parent $p(v)$ of  $v$ in the tree containing $v$;
\item $\level_v\in \mathbb{N}$ is the distance between $v$ and the root of the tree containing~$v$;
\item $\F_v$ is a boolean ($\F$ stands for ``frozen'').
\end{smallitemize}
Hence, in fact, our algorithm uses only one extra variable compared to the aforementioned classical distance-based proof-labeling scheme for trees. Moreover, this additional variable is just a boolean.  Hence, our algorithm just uses one bit more than the distance-based proof-labeling scheme for trees or forests. 

The algorithm makes use of the following three boolean predicates: 

\begin{eqnarray*}
\TRoot(v)&: & (\Root_v = \ID(v)) \wedge (\p_v = \bot) \wedge (\level_v = 0) \\
\TNode(v)& : &  (\p_v  \in N(v)) \wedge (\Root_v < \ID(v) ) \wedge \big [ \big((\Root_v=\Root_{\p_v}) \wedge (\level_v = \level_{\p_v}+1)\big) \vee (\Root_v>\Root_{\p_v})\big ]  \\
\ErT(v) & : &  \neg\TRoot(v) \wedge   \neg \TNode(v)  
\end{eqnarray*}

Note that $\TRoot(v)=true$ if  $v$ looks locally as a root of a tree, and $\TNode(v)=true$ if  $v$ looks locally as an internal node of a tree. The predicate $\ErT(v)$ is true if an only if node $v$ detects an inconsistency as it cannot be a root nor an internal node. The implementation of Algorithm $\AlgoT$ is specified by the four rules in Algorithm~\ref{alg:LEST}, where the fourth rule, $\R_{\tt Forest}$, uses the following value: 
\[
\Best(v) =
\left\{
  \begin{array}{ll}
   \min\{\ID(u), u\in N(v) \;\text{and}\; \Root_u=\min \{ \Root_w, w\in N(v)\}\} & \text{if } \{u\in N(v) \mid  \F_u=true\}=\emptyset\\
    \ID(v) &\text{otherwise} \\
  \end{array}
\right.\\
\]

\vspace*{-1ex}

\begin{algorithm}[htb]
\caption{\small Implementation of Algorithm $\AlgoT$}
\label{alg:LEST}
\small
\[
\begin{array}{lcllll}
\R_{\tt Error}&\hspace*{-0,3cm}:&\hspace*{-0,2cm}(\ch(v)\neq \emptyset) \wedge \ErT(v) \wedge \big[(\p_v,\F_v)\neq(\bot,true)\big] & \rightarrow& (\Root_v,\p_v,\level_v,\F_v)=(\Root_v,\bot,\level_v,true);\\
\R_{\tt Start}&\hspace*{-0,3cm}:&\hspace*{-0,2cm} (\ch(v)=\emptyset) \wedge \big [ \ErT(v)  \vee (\F_v=true)\big ]& \rightarrow&(\Root_v,\p_v,\level_v,\F_v)=(\ID(v),\bot,0,false); \\
\R_{\tt Freeze}&\hspace*{-0,3cm}:&\hspace*{-0,2cm} \neg  \ErT(v)\wedge (\F_v=false) \wedge (\F_{\p_v}=true) & \rightarrow&(\Root_v,\p_v,\level_v,\F_v)=(\Root_v,\p_v,\level_v,true); \\
\R_{\tt Forest}&\hspace*{-0,3cm}:&\multicolumn{3}{l}{ \hspace*{-0,2cm}\neg  \ErT(v)\wedge \F_v=false) \wedge \big ( (\p_v = \bot) \vee ( (\F_{\p_v}=false ) \big ) \wedge  (\Root_v>\Root_{\Best(v)})} \\
& & \multicolumn{3}{r}{\rightarrow (\Root_v,\p_v,\level_v,\F_v)=(\Root_{\Best(v)},\Best(v),\level_{\Best(v)}+1,false) ;}
\end{array}
\]
\end{algorithm}

 Observe that, at any point in time,  at most guard can be true at $v$. The role of each of the four rules in Algorithm $\AlgoT$ is the following. Essentially, the guard of the first rule is true for node $v$ if $v$ detects that the current 1-factor is not a forest, unless $v$ has not children, or $\p_v=\bot$ and $v$ is frozen. Indeed, in the latter case, it means that $v$ already detected an error, so the same rule must not be applied again. The second rule creates a tree reduced to the single node $v$ as long as $v$ has no children. This corresponds to the aforementioned upward propagation of the restarting nodes. The third rule is the downward propagation of the freezing process. Finally, the fourth rule is the rule insuring that the potentially many trees in the current spanning forest will eventually merge into a single spanning tree, rooted at the node with minimum identity. To be applied, the node $v$ must detect that one of its neighbors $u$ belongs to a tree whose identity $\Root_u$ is smaller than the identity $\Root_v$ of the tree spanning $v$. 

The lemma below is a crucial  tool toward the design of our minimum-degree spanning tree algorithm. It may also have an interest on its own. 

\begin{lemma}
\label{lem:election}
Algorithm $\AlgoT$ is a silent self-stabilizing algorithm that constructs a spanning tree of $G$ rooted at the node with minimum identity, and stabilizes on every spanning tree rooted at the node with minimum identity. It uses $O(\log n)$ bits of memory per node in $n$-node graphs, and stabilizes in $O(n)$ rounds. 
\end{lemma}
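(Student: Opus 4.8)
The plan is to prove three things separately: the $O(\log n)$ space bound, an exact characterization of the fixed points of $\AlgoT$ (which yields both silentness and the ``stabilizes on \emph{every} spanning tree rooted at the minimum-identity node'' claim), and convergence to a fixed point within $O(n)$ rounds (which, together with the characterization, gives self-stabilization). The space bound is immediate: each node stores $\Root_v,\p_v\in[1,n^c]$, $\level_v\in\{0,\dots,n-1\}$ and one boolean $\F_v$, all fitting in $O(\log n)$ bits. For the easy half of the characterization I would verify directly that any state encoding a spanning tree rooted at the minimum-identity node $r$, with $\Root_v=\ID(r)$, $\level_v=\dist_T(r,v)$, and all $\F_v$ false, disables all four guards: $\ErT$ is false everywhere (the root satisfies $\TRoot$, every other node satisfies $\TNode$ since $\ID(r)<\ID(v)$), no node is frozen, and $\R_{\tt Forest}$ is disabled because every node already carries the globally minimum root value. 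This simultaneously shows silentness and that every such tree is a legal fixed point that the algorithm never alters.

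For the converse, that every fixed point is legal, I would argue in four steps. (i) \emph{No node is frozen.} Inspecting $\R_{\tt Start}$ shows no frozen node can be a leaf, and inspecting $\R_{\tt Freeze}$ together with $\R_{\tt Error}/\R_{\tt Start}$ shows that every child of a frozen node is itself frozen; hence the frozen set is closed under taking children and, being finite and leaf-free, contains a cycle of parent pointers. But any parent-pointer cycle forces at least one of its nodes to satisfy $\ErT$ (take the cycle vertex of minimum $\Root$; the $\TNode$ conditions then propagate equality of $\Root$ around the cycle and yield the impossible level relation $\sum\level_{v_i}=\sum\level_{v_i}+(\text{cycle length})$), and that node would enable $\R_{\tt Error}$, a contradiction. (ii) With no frozen nodes, any node with $\ErT$ true would enable $\R_{\tt Start}$ (if a leaf) or $\R_{\tt Error}$ (otherwise); hence $\ErT$ is false everywhere, every node is a $\TRoot$ or a $\TNode$, and the same cycle argument shows the parent pointers form an acyclic forest. (iii) Since $\R_{\tt Forest}$ is disabled and no neighbor is frozen, every node satisfies $\Root_v\le\Root_u$ for all $u\in N(v)$; applying this to both endpoints of each edge and using connectivity of $G$ forces $\Root$ to be globally constant, equal to some $R$. (iv) Each forest root $\rho$ has $\Root_\rho=\ID(\rho)=R$, so by distinctness of identities there is a unique root $r$ with $\ID(r)=R$, the forest is a single spanning tree, every non-root $v$ has $R=\Root_v<\ID(v)$ so $r$ has minimum identity, and the equal-root disjunct of $\TNode$ forces $\level_v=\level_{\p_v}+1$; this is exactly a legal state.

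For convergence I would follow the two-phase intuition but isolate the clean phase first. Call a state a \emph{legitimate forest} if the parent pointers are acyclic, every node satisfies $\TRoot$ or $\TNode$, and no node is frozen. On such a state only $\R_{\tt Forest}$ can fire, and I would show it preserves legitimacy, even under simultaneous moves: when $v$ reattaches to $u=\Best(v)$ it sets $\Root_v$ to the strictly smaller value $\Root_u$, so (a) $u$ cannot be a descendant of $v$ (along any downward path $\Root$ is non-decreasing by the $\TNode$ conditions, so a descendant would have $\Root\ge\Root_v$), ruling out new cycles, and (b) each former child of $v$, and each mover whose chosen parent also moved, lands in the $\Root_{\text{child}}>\Root_{\p}$ disjunct of $\TNode$ and stays error-free. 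In a legitimate forest every root value is at least the minimum identity $\mu$ (root values do not increase toward the roots of the forest, which carry genuine identities), so the minimum-identity node $r^*$ is a permanent stable root that never fires any rule. A straightforward induction then shows that after $k$ rounds every node within distance $k$ of $r^*$ has adopted $\Root=\mu$ with a consistent parent and level and is frozen in place, so a legitimate forest stabilizes in at most the eccentricity of $r^*$, i.e.\ $O(n)$ rounds. It remains to reach a legitimate forest from an arbitrary state: $\R_{\tt Error}$ turns each detected inconsistency into a frozen root, $\R_{\tt Freeze}$ pushes the freeze wave downward, and $\R_{\tt Start}$ dismantles frozen subtrees leaf-by-leaf from the bottom up; the crucial ``barrier'' is that $\Best(v)=\ID(v)$ whenever $v$ has a frozen neighbor, disabling $\R_{\tt Forest}$ near any frozen structure so that no node reattaches to a subtree being torn down.

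The main obstacle I expect is exactly this last, cleanup phase: proving that the freeze/restart dynamics terminate in $O(n)$ rounds rather than cycling. The downward freeze wave and the upward restart wave each traverse at most $n$ levels, but one must rule out a node being frozen, restarted and re-frozen many times, and must control the asynchronous interleaving of cleanup and merging. I would handle this by using the barrier property and the fact (established above) that $\R_{\tt Forest}$ creates neither cycles nor errors, to argue that all freezing is ``seeded'' only by the inconsistencies already present in the initial state: once a maximal corrupted structure has been frozen and fully restarted it cannot be recreated, so each node takes part in only a bounded number of freeze/restart events. Formalizing ``corrupted structure'' and carrying out the accounting that yields the clean $O(n)$ bound (rather than $O(n^2)$), while confirming that simultaneous moves never manufacture a cycle or spurious error through the strictly-decreasing-$\Root$ invariant, is the part that needs the most care.
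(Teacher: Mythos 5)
Your overall architecture is sound, and two of its three pieces match or improve on the paper. Your explicit fixed-point characterization is a genuinely different route: the paper never characterizes the set of terminal states directly, but instead proves a chain of attractor-and-closure claims $\Gamma \rhd \Gamma_{\mbox{\scriptsize\tt no-cycle}} \rhd \Gamma_{\tt Forest} \rhd \Gamma_{\tt tree}$ and gets silentness from the fact that nothing is enabled in $\Gamma_{\tt tree}$; your version has the advantage of nailing the ``stabilizes on \emph{every} spanning tree rooted at the minimum-identity node'' clause explicitly. Your cycle argument (constant $\Root$ around any parent-pointer cycle plus the level summation forces $\ErT$ somewhere), your proof that simultaneous $\R_{\tt Forest}$ moves preserve error-freeness (the mover whose chosen parent also moved lands in the $\Root_v>\Root_{\p_v}$ disjunct of $\TNode$), and your merge-phase induction around the permanently stable root $r^*$ coincide with the paper's Claims~1 and~4 (the paper bounds the merge phase by $O(D)$ rounds via a potential $\psi(\gamma)=\sum_v(\Root_v-\ID(r))$, but the substance is the same as your eccentricity induction).

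The genuine gap is exactly the one you flagged, and your proposed repair would fail as stated. The ``no new seeds'' half is fine and is the paper's Claim~2: within one round every node reaches a state where $\ErT(v)$ is false or $(\p_v,\F_v)=(\bot,true)$, this set is closed and cycle-free, and $\R_{\tt Error}$ never fires again afterwards. But your inference ``once a maximal corrupted structure has been frozen and fully restarted it cannot be recreated, so each node takes part in only a bounded number of freeze/restart events'' does not hold: the barrier $\Best(v)=\ID(v)$ only disables $\R_{\tt Forest}$ when some neighbor is \emph{already} frozen, so a restarted node can legitimately re-attach to an unfrozen node lying \emph{below} the still-descending freeze front of an existing frozen structure, and then be frozen and dismantled a second time; with up to $\Theta(n)$ such structures, per-node event counting yields at best $O(n^2)$, not $O(n)$. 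The paper closes this with a round-wise potential over ``icebergs'' (maximal subtrees hanging from topmost frozen nodes): letting $h_i$ be the depth of the freeze front in iceberg $T_i$, it sets $\phi_i=2n-h_i$ while $T_i$ still contains unfrozen nodes, and $\phi_i=h_i$ (one plus the height) once $T_i$ is fully frozen. Each $\phi_i$ strictly decreases at \emph{every} round, robustly to the re-attachment phenomenon above (newcomers can only join below the front, which only deepens it, and a fully frozen iceberg cannot grow at all since every potential newcomer then has a frozen neighbor), and since no new icebergs are ever created and $\phi_i\leq 2n$, the clean forest states $\Gamma_{\tt Forest}$ are reached within $O(n)$ rounds. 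Your outline needs this per-round global measure, or an equivalent one, to obtain the stated $O(n)$ bound; everything else in your proposal then assembles into a complete proof.
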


\begin{proof}
Let $G=(V,E)$ be the graph in which Algorithm $\AlgoT$ is executed, where each node is provided with a distinct identity in $[0,n^c]$. Let $\Gamma$ be the set of all possible (global) states, that is, $\Gamma$ is the set of all possible collections of individual states of the nodes in $G$. 

We first show that Algorithm \AlgoT\/ does not produce errors among the parent pointers. 

\begin{claim}\label{claim:no-individual-error}
For every node $v$, and every state $\gamma\in\Gamma$, if  $\ErT(v)=$ false in $\gamma$, then, for every state $\gamma'$ reachable from $\gamma$ by Algorithm  $\AlgoT$, we have $\ErT(v)=$ false.
\end{claim}

To establish the claim, first observe that, in the state $\gamma$, only the two rules $\R_{\tt Freeze}$ and $\R_{\tt Forest}$ can be executed at node $v$. The rule $\R_{\tt Freeze}$  has no impacts on the variables $\Root_v, \p_v$, and $\level_v$ used for defining $\ErT(v)$. The rule $\R_{\tt Forest}$ modifies these three variables. However, after having applied $\R_{\tt Forest}$, we have $\Root_{\Best(v)}< \Root_v$ whenever $\Best(v)$ applied $\R_{\tt Forest}$ simultaneously,  otherwise $\Root_{\Best(v)}= \Root_v$ with $\level_v=\level_{\Best(v)}+1$. Therefore the possible actions of $v$ preserves $\ErT(v)=$ false. Now, we prove that the actions of the neighbors of $v$ in the network cannot yield  $\ErT(v)=$ true. Any neighbor $u$ executing $\R_{\tt Error}$ or $\R_{\tt Freeze}$ keeps $\Root_u$ and $\level_u$ unchanged, and hence cannot modify $\ErT(v)$. The parent of $v$ cannot execute $\R_{\tt Start}$. A child $u$ of $v$ executing $\R_{\tt Start}$ results in $u$ disconnecting itself from $v$, and hence does not modify $\ErT(v)$. Finally, consider a neighbor $u$ of $v$ which is neither the parent of $v$, nor one of its children. Such a node $u$ may impact $v$ if it applies $\R_{\tt Forest}$ in order to adopt $v$ as its parent. However, while connecting to $v$, node $u$ precisely sets its variables $\Root_u,\p_v$, and $\level_v$ so that $\ErT(u)=\ErT(v)=$ false. This completes the proof of the claim.

\bigskip

Given any two sets $\Gamma_2\subseteq \Gamma_1 \subseteq \Gamma$, the fact that, given any $\gamma_1\in\Gamma_1$, Algorithm  $\AlgoT$ starting from $\gamma_1$ eventually reaches a state $\gamma_2\in\Gamma_2$, is denoted by
\[
\Gamma1 \rhd \Gamma_2.
\]
Moreover, a set $\Gamma'\subseteq \Gamma$ is said to be \emph{closed} under Algorithm  $\AlgoT$ if, for every $\gamma\in\Gamma'$, every state $\gamma'$ reachable from $\gamma$ by $\AlgoT$ satisfies $\gamma'\in\Gamma'$. 

Let us define $\Gamma_\text{\tt no-cyle}\subseteq \Gamma$  as the set of states $\gamma\in \Gamma$ such that, for every node $v$, we have  
\[
\ErT(v)=false \;\mbox{or}\; (\p_v,\F_v)=(\bot,true).
\] 
Note that an important characteristic of $\Gamma_\text{\tt no-cyle}$ is the absence of cycle in any state $\gamma\in\Gamma_\text{\tt no-cyle}$. Indeed, in such a $\gamma$, any node $v$ with $\ErT(v)=true$ must satisfy $\p_v=\bot$. Thus, such a node cannot be involved in a cycle. On the other hand, in any cycle, at least one node $v$ must satisfy $\ErT(v)=true$ with $\p_v\neq\bot$. 

\begin{claim}\label{claim:no-cylce}
$\Gamma \rhd \Gamma_\text{\tt no-cyle}$ in at most one round, and $\Gamma_\text{\tt no-cyle}$ is closed under Algorithm  $\AlgoT$.
\end{claim}

Let $\gamma\in \Gamma$ be the current state. If $\gamma\notin \Gamma_\text{\tt no-cyle}$, then let $v$ be a node such that $\ErT(v)=true$ and $(\p_v,\F_v)\neq (\bot,true)$. At such a node, either $\R_{\tt Error}$ or $\R_{\tt Start}$  is activatable, depending on whether $v$ has children or not. In both case, the corresponding rule will be activated in the current round as the scheduler is weakly fair.  If $v$ has no children, then, once $v$ has applied $\R_{\tt Start}$, it becomes a root, and thus $\ErT(v)=false$. Instead, if $v$ has at least one  child, then, once $v$ has applied $\R_{\tt Error}$, we have $(\p_v,\F_v)=(\bot,true)$. Therefore, $\Gamma \rhd \Gamma_\text{\tt no-cyle}$ in at most one round. To establish the closeness of $\Gamma_\text{\tt no-cyle}$, we just need to consider the case of nodes applying $\R_{\tt Start}$ because, if $\ErT(v)=false$ then, by Claim~\ref{claim:no-individual-error}, $\ErT(v)$ remains false. So, let $v$ be a node such that $\ErT(v)=true$ and $(\p_v,\F_v)=(\bot,true)$. If such a node $v$ is activated, then it can only apply  $\R_{\tt Start}$, which yields $\ErT(v)=$ false, and thus $\ErT(v)$  will remains false by Claim~\ref{claim:no-individual-error}. This completes the proof of the claim. 

\bigskip

Let $\Gamma_{\tt Forest}$ be the set of states in which, for every node $v\in V$, we have $\ErT(v)=$ false, and $\F_v=$ false. By definition, $\Gamma_{\tt Forest}\subseteq  \Gamma_\text{\tt no-cyle}$. 

\begin{claim}\label{claim:no-error-forest}
$\Gamma_\text{\tt no-cyle}  \rhd \Gamma_{\tt Forest}$ in $O(n)$ rounds, and $\Gamma_{\tt Forest}$ is closed under Algorithm  $\AlgoT$.
\end{claim}

Let $\gamma\in \Gamma_\text{\tt no-cyle}$ be the current state, and let $r_1,\dots,r_k$ be the nodes satisfying that, for $i=1,\dots,k$, $\F_{r_i}=$ true and every ancestor $v$ of $r_i$ satisfies $\F_v=$ false. Let $T_1,\dots,T_k$ be the set of subtrees in  $\gamma$, rooted at $r_1,\dots,r_k$. Such trees are called \emph{icebergs}. For $i=1,\dots,k$, let $h_i(\gamma)$ be the distance in $T_i$ from $r_i$ to the closest descendent $v_i$ in $T_i$ satisfying $\F_{v_i}=$ false, if any. Otherwise, $h_i(\gamma)$ is 1 plus the height of $T_i$, that is 1 plus  the longest distance from $r_i$ to a leaf of $T_i$. Let $\phi_i(\gamma)=2n-h_i(\gamma)$ if $r_i$ has a descendent $v$ with $\F_v=$ false, and $\phi_i(\gamma)=h_i(\gamma)$ otherwise. We define the \emph{potential} $\phi: \Gamma_\text{\tt no-cyle} \to \mathbb{N}$ as follows
\[
\phi(\gamma)=\sum_{i=1}^k \phi_i(\gamma)~.
\]
Observe that, by definition, for every $\gamma\in\Gamma_\text{\tt no-cyle}$, we have $\phi(\gamma) \geq 0$, and $\phi(\gamma)=0$ if and only if $\gamma\in \Gamma_{\tt Forest}$. Also observe that, at each step of Algorithm  $\AlgoT$, the number of icebergs cannot increase. Indeed, since $\gamma\in \Gamma_\text{\tt no-cyle}$ and $\Gamma_\text{\tt no-cyle}$ is closed, the rule $\R_{\tt Error}$ will no more be applied at any node, and none of the three other rules can create icebergs. In fact, the only rule which may decrease the numbers of icebergs is $\R_{\tt Start}$. Since the number of icebergs cannot increase, we analyze the future of each iceberg $T_i$, and hence of each $\phi_i$, separately. 

Assume first that, in $\gamma$,  there are non frozen nodes in $T_i$. Every such non frozen node whose parent is frozen in $\gamma$  is activatable, and thus will be activated during the round since the scheduler is weakly fair. The consequence of this activation is that these nodes become frozen (because of $\R_{\tt Freeze}$). At the end of the round leading from $\gamma$ to $\gamma'$, there are two cases. If there are still non frozen nodes in $T_i$, then $h_i(\gamma') > h_i(\gamma)$, and thus $\phi_i(\gamma') = 2n-h_i(\gamma') < 2n-h_i(\gamma) = \phi_i(\gamma)$. If  all nodes of $T_i$ are frozen at the end of the round, then $\phi_i(\gamma')=h_i(\gamma')+1 < 2n-h_i(\gamma) = \phi_i(\gamma)$ because the height of a tree is at most $n-1$. Assume next that, in $\gamma$,  all nodes of $T_i$ are frozen. In this case, all leaves of $T_i$ are activatable, and thus will be activated during the round. The consequence of this activation is that these nodes disconnect from $T_i$  (because of $\R_{\tt Start}$). As a result, by the end of the round, we get $\phi_i(\gamma')=h_i(\gamma')+1 < h_i(\gamma)+1 = \phi_i(\gamma)$. Hence, $\phi$ is decreasing at each round. Since $\phi_i(\gamma)\leq 2n$, we get that $\phi_i$ becomes null after at most $2n$ rounds starting from $\gamma$. Hence, $\Gamma_\text{\tt no-cyle}  \rhd \Gamma_{\tt Forest}$ in at most $2n$ rounds. 

The fact that $\Gamma_{\tt Forest}$ is closed under Algorithm  $\AlgoT$ is a direct consequence of the fact that, by definition of $\Gamma_{\tt Forest}$, only rules $\R_{\tt Forest}$ can be activated at every node. By Claim~\ref{claim:no-individual-error}, the activation of this rule keeps the state in  $\Gamma_{\tt Forest}$. This completes the proof of Claim~\ref{claim:no-error-forest}. 

\bigskip

Let $\Gamma_{\tt Tree}$ be the set of states in which, for every node $v\in V$, we have $\ErT(v)=$ false, $\F_v=$ false, and the 1-factor $\{(v,\p_v),v\in V\}$ is a tree. By definition, $\Gamma_{\tt tree}\subseteq  \Gamma_\text{\tt forest}$. 

\begin{claim}\label{claim:final}
$\Gamma_\text{\tt forest}  \rhd \Gamma_{\tt tree}$ in $O(D)$ rounds, where $D$ denotes the diameter of the network. Moreover,  $\Gamma_{\tt tree}$ is closed under Algorithm  $\AlgoT$.
\end{claim}

To establish the claim, let $\psi:\Gamma_\text{\tt forest}  \rightarrow \mathbb{N}$ be the potential function defined as:
\[
\psi(\gamma)= \sum_{v\in V} (\Root_v-\ID(r))
\]
where $r$ is the node with the minimum identity among all nodes in $G$. Note that, for every $\gamma\in \Gamma_\text{\tt forest}$, we have $\psi(\gamma)\geq 0$, and $\psi(\gamma)=0$ if and only $\gamma\in  \Gamma_{\tt tree}$. $\Gamma_{\tt tree}$ is closed under Algorithm  $\AlgoT$ simply because no rules are activatable at any node in any state in $\Gamma_{\tt tree}$. For every $\gamma\in \Gamma_\text{\tt forest}$, the only rule that is activatable at every node $v$ is $\R_{\tt Forest}$, and the node $r$ is necessarily the root of a tree $T$ in the forest $F=\{(v,\p_v),v\in V\}$.  All nodes not in $T$ that are neighbors of a node in $T$ are activatable in a round. Each such node $v$ applies  $\R_{\tt Forest}$ during the round since the scheduler is weakly fair. As a result,  $v$ joins $T$ during the round. Hence, after at most $D$ rounds, all nodes have joined the tree rooted at $r$, and hence the state is in $\Gamma_{\tt tree}$ after at most $D$ rounds. This completes the proof of the claim. 

\medskip

The lemma follows by successive application of Claims~\ref{claim:no-cylce}, \ref{claim:no-error-forest}, and~\ref{claim:final}. 
\end{proof}

\noindent\textbf{Remark.} Note that  the way the nodes apply $\R_{\tt Forest}$ in the proof of Claim~\ref{claim:final} does not necessarily yield a BFS tree. Indeed, a node connects to its ``best'' neighbor, defined as the neighbor in the current tree with smallest identity. Of course, the selection of this best neighbor could easily be modified if one would be interested in constructing a BFS tree. However, this is not our point of interest as this tree will be used as a starting point for constructing a minimum-degree spanning tree. In fact, a BFS tree is even generally not a good starting point as BFS trees tend to have high degree in general. 

\subsection{Silent loop-free algorithm for permuting edges in a fundamental cycle}
\label{subsec:tool2} 
 
Given the current spanning tree $T$, Instructions~\ref{ins:12}-\ref{ins:13} of Algorithm~\ref{algo:FR} requires to perform a sequence of updates of the tree, in order to construct another tree $T'$. Implementing one updates boils down to permuting a tree edge with a non-tree edge in a fundamental cycle (i.e., a cycle resulting in adding an edge $\{u,v\}$ between two nodes in a tree. In turn, such a permutation can be performed by a sequence of local operations in which a node permutes the edge connecting it to its parent with an incident non-tree edge. This local operation is called a \emph{switch}. In the following, we start by introducing a \emph{redundant} proof-labeling scheme for trees. Then we show how to use this redundant labeling in order to perform a switch. Finally, we show how to perform a permutation between a tree edge and a non-tree edge in a fundamental cycle, via a sequence of switches. The difficulty is to keep all operations silent, and in particular to never create loop in the 1-factor $\{(v,\p_v),v\in v\}$.  

We start by describing a \emph{redundant} proof-labeling scheme for verifying spanning trees. This labeling scheme  is crucial  with many respects  as it will allow us to modify trees without creating errors due to misinterpreting labels during this modification. In the redundant labeling scheme, the label $L(v)$ of node $v$ is a triple  $(\ID,d,s)$ where, as in the classical distance-based proof-labeling scheme for spanning tree, $\ID$ denotes the identity of the root, and $d$ denotes the distance to this root. The new parameter $s$ denotes the \emph{size} of the subtree rooted at $v$ is the current tree. Hence, in particular, we must have $s_v=1+\sum_{u\in {\footnotesize \ch(v)}}s_u$ where $\ch(v)$ denotes the set of children of $v$ in the tree. It is folklore that, as for the distance-based labeling scheme using the pairs $(\ID,d)$, the pairs $(\ID,s)$ alone provide a proof-labeling scheme for spanning trees. We call this latter labeling the \emph{size-based} labeling scheme. 

\subsubsection{Pruned labeling}

The interesting property of the redundant labeling scheme is that it is flexible enough to let some distance variables  unspecified (e.g., equal to $\bot$). More specifically, let $T$ be a spanning tree of $G$. Assign to each node the due pair $(d,s)$ corresponding to a distance-based labeling scheme, and to a size-based labeling scheme for $T$, with the same identity of the root. We now define a \emph{pruning} of this labeling. To create a pruning of the redundant labeling scheme, one is allowed to replace the entry $d$, or $s$, but not both, by $\bot$, on an arbitrary number of nodes. Not all replacements are however allowed: in addition to forbidding creating pairs $(d,s)=(\bot,\bot)$,  the following two constraints must be satisfied at every node~$v$, where $L(v)=(d,s)$,  $L(p(v))=(d',s')$, and $\hat{L}(\cdot)$ denotes the pruned labeling: 
\begin{smallitemize}
\item[C1:] $\hat{L}(v)=(d,\bot) \Rightarrow \hat{L}(p(v))=(d',\bot) $
\item[C2:] $\hat{L}(v)=(\bot,s) \Rightarrow \hat{L}(p(v))=(d',s') \;\mbox{or}\; \hat{L}(p(v))=(\bot,s') $
\end{smallitemize}

\begin{lemma}
\label{lem:lemmedesprunes}
There exists a verification procedure satisfying the following two properties. {\rm (1)} For any pruning of any legal redundant labeling of any spanning tree $T$, all nodes accept $T$. {\rm (2)}  For any labeling of a non-tree $H$ with triples $(\ID,d,s)$ where $d$ and $s$ can potentially be $\bot$, at least one node rejects $H$.
\end{lemma}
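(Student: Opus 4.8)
The plan is to exhibit one explicit, purely local verification procedure and then check the two required properties against it. I would have each node $v$, equipped with its pruned label $\hat L(v)=(\ID,d,s)$ and with read access to $p(v)$, to $N(v)$, and to the labels of its neighbours, run the following tests and reject if any fails: \textbf{(i)} $\hat L(v)\neq(\bot,\bot)$, and $s\geq 1$ resp.\ $d\geq 0$ whenever these fields are not $\bot$; \textbf{(ii)} the identity field $\ID$ equals that of every neighbour in $G$, and if $p(v)=\bot$ then $\ID=\ID(v)$ and $d\in\{0,\bot\}$; \textbf{(iii)} the pruning constraints C1 and C2, read between $v$ and $p(v)$; \textbf{(iv)} a \emph{distance} test: if $d\neq\bot$ and the distance field $d'$ of $p(v)$ is also present, then $d=d'+1$; \textbf{(v)} a \emph{size} test: if $s\neq\bot$ then every $u\in\ch(v)$ has $s_u\neq\bot$ and $s=1+\sum_{u\in\ch(v)}s_u$. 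Every test involves only $v$, $p(v)$, and $N(v)$, so this is a genuine proof-labeling verifier.

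For property (1) I would argue that no test can fire on a pruning of a correct labeling of a genuine spanning tree $T$. Tests (i)--(iii) hold by the definition of a legal pruning. For test (iv) the original distances are correct, so the equality holds wherever both fields survived. The only delicate point is test (v): it is triggered only at a node $v$ whose size field survived, and there one must know that no child of $v$ had its size pruned. This is exactly the contrapositive of C1 (if $v$ keeps $s$, so does every child), after which $s=1+\sum_u s_u$ holds because the surviving sizes are the true subtree sizes. Hence every node accepts.

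For property (2) the core is to show that any parent-pointer structure $H$ that is not a spanning tree forces a rejection, whatever labels the adversary supplies. First, test (ii) pins down the global picture: either some $G$-edge exhibits disagreeing identities (rejection), or all nodes share one identity $\rho$, in which case the unique node that can legally have $p(v)=\bot$ is the one with $\ID(v)=\rho$; thus $\geq 2$ roots are rejected and we may assume $H$ has at most one root. Since $H$ is a finite functional graph with at most one root that is not a tree, it must contain a directed cycle $v_0\to v_1\to\cdots\to v_{\ell-1}\to v_0$ with $p(v_i)=v_{i+1}$ (indices mod $\ell$). Now C1 in test (iii) propagates ``$s=\bot$'' from a node to its parent; going once around the cycle this forces the presence of the size field to be \emph{constant} along the cycle. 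If all nodes carry $s$, test (v) gives $s_{v_{i+1}}\geq 1+s_{v_i}>s_{v_i}$ for every $i$, so sizes strictly increase all the way around, a contradiction. If none carries $s$, then by test (i) every node carries $d$, and since each parent also carries $d$, test (iv) gives $d_{v_i}=d_{v_{i+1}}+1$, so distances strictly decrease around the cycle, again a contradiction. Hence some node of the cycle rejects.

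The main obstacle, and the reason the redundant scheme is shaped the way it is, is precisely the \emph{mixed} cycle, in which some nodes would certify acyclicity through distances and others through sizes, so that neither monotonicity argument applies globally. The observation that dissolves this obstacle is that C1 makes the set of nodes keeping the size field downward-closed, hence constant around any cycle; this is what lets me split property (2) into the two clean monotone cases. A secondary point requiring care is keeping the verifier lenient enough for property (1) --- the distance and size equalities are tested only where the relevant fields survive --- while still strong enough for property (2); the cycle dichotomy above is exactly what confirms that this leniency never lets a non-tree slip through. Note that C2 plays no role in the rejection argument; it belongs to the pruning definition because it is needed later for the loop-free switches, and since legal prunings satisfy it, including its check does not endanger property (1).
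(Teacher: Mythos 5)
Your proposal is correct and takes essentially the same route as the paper's proof: your tests (iii)--(v) are just a re-expression of the paper's verification table (reject C1/C2 violations, check the distance equality where both distance fields survive, check the size equality where the size field survives), and your soundness argument --- C1 forces the presence of the size field to be constant around any directed cycle of the 1-factor, after which either the size check yields strictly increasing sizes or the distance check yields strictly decreasing distances --- is exactly the paper's cycle dichotomy. If anything, you are slightly more explicit than the paper, which compresses the all-$(d,\bot)$ case (handled by the distance check) and the root-uniqueness argument via identity agreement into a couple of terse sentences.
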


\begin{proof}
We consider the verification procedure described in the table below where ``distance'' stands for checking whether $d_v=d_{p(v)}+1$, and ``size'' stands for checking whether  $s_v=1+\sum_{u\in {\footnotesize \ch(v)}}s_u$, and output yes or no accordingly. (Of course, the presence of a unique root $\ID$ is also checked in all cases, as in the classical distance-based and size-based labeling schemes). 

\begin{center}
\begin{tabular}{cc}
 & Label of $p(v)$ \\
Label of $v$ & \begin{tabular}{c|c|c|c|}
              & $(d',s')$ & $(d',\bot)$ & $(\bot,s')$ \\
 \hline
 $(d,s)$     & \textcolor{blue}{distance} \mbox{and} \textcolor{Green}{size}  & \textcolor{blue}{distance} & \textcolor{Green}{size}   \\
 \hline
 $(d,\bot)$ & \textcolor{red}{no} & \textcolor{blue}{distance} & \textcolor{red}{no}   \\
 \hline
 $(\bot,s)$ & \textcolor{Green}{size} & \textcolor{red}{no} &  \textcolor{Green}{size} \\
 \hline
 \end{tabular}
\end{tabular}
\end{center}
Hence, in particular, if both the label of $v$ and the label of $p(v)$ are intact (i.e., no entries have been turned to $\bot$), then the verification performs at $v$ by checking the distance property between $v$ and $p(v)$, and the size property between $v$ and all nodes in $\ch(v)$. Instead, if the label of $v$ is intact, but the label of $p(v)$ is of the form $(\bot,s)$, then the verification performs at $v$ by checking the size property between $v$ and its children.

Let $T$ be a spanning tree with nodes labeled by pruning a legal redundant labeling. None of the labels can be of the form $(\bot,\bot)$ and the pruning must satisfy C1 and C2. If  $v$ has a label of the form $(d,s)$, and $p(v)$ has a label of either the form $(d',s')$ or the form $(d',\bot)$), then $v$ outputs yes since the distance-based labeling is correct, and, whenever the pruned label of $v$ is of the form $(d,s)$ then, by C2, every child of $v$ has label either $(d'',s'')$ or ($\bot,s'')$. The same holds if  $v$ has a label of the form $(d,\bot)$, and $p(v)$ has a label of the form $(d',\bot)$.  If  $v$ has a label of the form $(d,s)$, but $p(v)$ has a label of the form $(\bot,s')$, then node $v$ checks size. By C1 and C2, all the children of $v$ have labels of the form $(d'',s'')$, or $(\bot,s'')$. Thus $v$ outputs yes since the size-based labeling is correct. Finally, if $v$ has a label of the form $(\bot,s)$, then again it performs check size. By C2, all children of $v$ have labels of the form $(d'',s'')$ or ($\bot,s'')$. Thus $v$ outputs yes since the size-based labeling is correct. As a consequence, the verification accepts the tree $T$, as claimed. 

Conversely, let $H=\{(v,p(v)),v\in V\}$ be the 1-factor induced by the parent pointers, and assume that $H$ is not a spanning tree of $G$. Assume, for the purpose of contradiction, that the verification accepts $H$. As a consequence, all nodes have the same value for the root $\ID$ in theirs labels. Therefore $H$ contains a cycle $C$. The presence on $C$ of a node with pruned label of the form $(d,\bot)$ implies that all nodes of $C$ are of the form $(d,\bot)$, by C1. Hence, all nodes of $C$ are either of the form $(d,s)$ or $(\bot,s)$. In that case, the  size-based labeling detects the cycle, a contradiction. Therefore, we conclude that the verification rejects $H$, as claimed. 
\end{proof}

\subsubsection{The local-switching algorithm}

We are now ready to describe our silent loop-free algorithm for the switching task, called $\AlgoA$. The switching task is specified by a rooted spanning tree $T$, a node $v$ distinct from the root, with parent $p(v)=w$, and a node $w'\in N(v)$ different from any descendent of $v$. The objective is for $v$ to assign $w'$ as its new parent, resulting in a new rooted spanning tree $T'$. The algorithm proceeds in three phases.  (It is here assumed that the nodes of the spanning tree $T$ are initially correctly labeled by the redundant proof-labeling scheme for trees). See Figure~\ref{fig:pruned} for a graphical representation of the algorithm.

\begin{figure}[tb]
\centerline{\includegraphics[width=11cm]{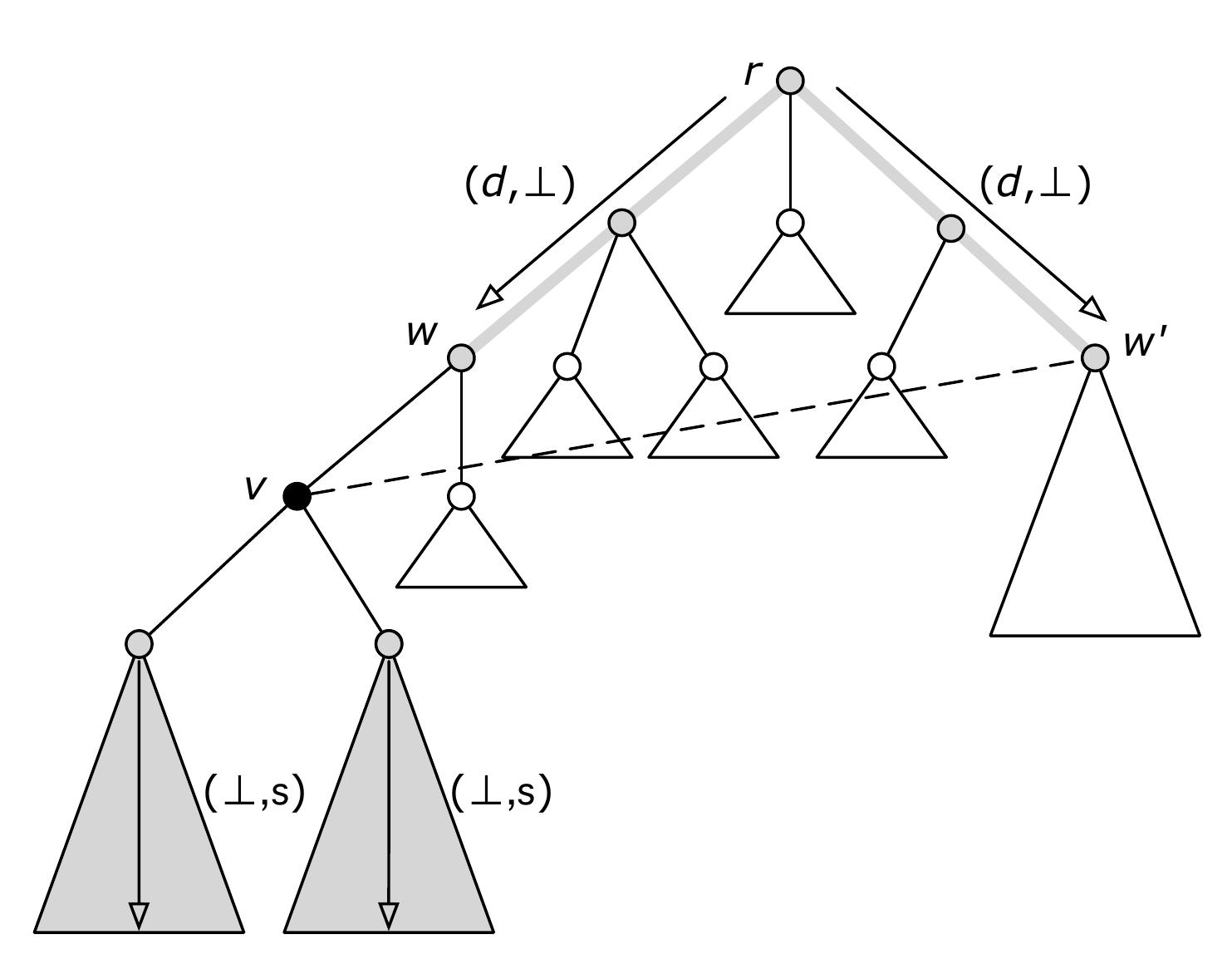}}
\caption{\sl Node $v$ switches parent from $p(v)=w$ to $p(v)=w'$.}
\label{fig:pruned}
\end{figure}

\begin{enumerate}
\item Pruning phase. Node $v$ initiated three ``waves'' of updates for the redundant labeling, resulting in a pruned labeling. First, a signal goes upward from $w$ to the root $r$, while another signal goes upward from $w'$ to $r$. Once $r$ receives any of these two signals, it initiates  a wave of label-updates downward along the path from where it got the signal: every node $u$ on the path $P_{r,w}$ from $r$ to $w$ (respectively, on the path $P_{r,w'}$ from $r$ to $w'$) receiving this updating wave modifies its label from $(d,s)$ to $(d,\bot)$. The third wave of updates goes downward the subtree $T_v$ rooted at $v$, starting from every child of $v$. The nodes in this subtree successively prune their label from $(d,s)$ to $(\bot,s)$, once their parents have done so. Hence, all these updates result in a legal pruning of the original redundant labeling of $T$. 

\item Switch phase. Once the labels of $w$ and $w'$ have both be pruned to a label of the form $(d,\bot)$, and all children of $v$ have their labels turned to $(\bot,s)$, node $v$ sets it parent to $w'$, i.e., we now have $p(v)=w'$. Simultaneously, node $v$ updates its distance to 1 plus the distance of $w'$ to $r$. 

\item Relabeling phase. The former parent $w$ of $v$ then recomputes the size of its subtree, by adding the sizes of all its children (those children have kept the size of their subtrees in their label). Every node along the path $P_{r,w}$ from the former parent $w$ of $v$ to $r$ proceeds the same successively, upward, up to $r$. Similarly, the new parent $w'$ of $v$ recomputes the size of its subtree, by adding the sizes of all its children (including $v$, which kept the size of its subtree in its label). Then every node along the path $P_{r,w'}$ from $w'$  to $r$ proceeds the same successively, upward, up to $r$. Once $v$ has changed its parent to $w'$, every node in the subtree $T_v$ of $v$ recomputes its distance to the root, successively downward, down to the leaves. 
\end{enumerate}

\begin{lemma}
\label{lem:loop-free}
Algorithm $\AlgoA$  executes the switch task in $O(n)$ rounds. It is silent, loop-free, and uses $O(\log n)$ bits of memory at each node. 
\end{lemma}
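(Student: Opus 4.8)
The plan is to isolate a single global invariant that is preserved throughout every execution of $\AlgoA$, and then read off all four claims from it. The invariant is: \emph{at every point in time the 1-factor $\{(v,\p_v),\,v\in V\}$ defined by the parent pointers is a spanning tree, and the current node labels form a valid pruning (satisfying C1 and C2, with no $(\bot,\bot)$) of a legal redundant labeling of that very tree.} Granting the invariant, loop-freedom is immediate, since the 1-factor is a tree at all times, and Lemma~\ref{lem:lemmedesprunes}(1) guarantees that every node always accepts the current tree, so no spurious error is ever raised during the transformation. The memory bound is also immediate: each label is a triple $(\ID,d,s)$ on $O(\log n)$ bits, and the only extra state is a constant number of wave/phase markers, encodable in $O(\log n)$ bits.

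Establishing the invariant is the core of the argument, and I would do it phase by phase. Before any action the labeling is the intact legal redundant labeling of $T$, which trivially satisfies C1 and C2. During the \emph{pruning phase}, the two upward signals and the two downward waves along $P_{r,w}$ and $P_{r,w'}$ turn those nodes' labels to $(d,\bot)$, while the wave into $T_v$ turns every strict descendant of $v$ to $(\bot,s)$; node $v$ itself keeps its intact label, as does every off-path sibling subtree. One checks that each intermediate state is a valid pruning: the path nodes form upward chains of $(d,\bot)$ labels, so C1 holds, and the $(\bot,s)$ labels inside $T_v$ have parents that are either $(\bot,s)$ or the intact node $v$, so C2 holds. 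Hence throughout pruning the labeling certifies $T$.

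The delicate step is the \emph{switch}, which I expect to be the main obstacle, and it is resolved precisely by the redundancy. When the guard becomes true---$w$ and $w'$ both carry $(d,\bot)$ and all children of $v$ carry $(\bot,s)$---node $v$ atomically sets $\p_v=w'$ and $\level_v=\level_{w'}+1$, turning the 1-factor into $T'$. First, since $w'$ is, by the task specification, not a descendant of $v$, replacing the edge $(v,w)$ by $(v,w')$ keeps the 1-factor acyclic, so $T'$ is a tree. Second, the same still-pruned labeling is now a valid pruning of the legal redundant labeling of $T'$: the entries that became stale under the re-parenting are exactly the pruned ones---the sizes along $P_{r,w}$ (which lost $s_v$) and along $P_{r,w'}$ (which gained $s_v$) are all $\bot$, and the distances inside $T_v$ (which all shifted) are all $\bot$---whereas every retained entry is still correct for $T'$: the distances on the two paths are unchanged, the sizes inside $T_v$ are unchanged, and $v$'s freshly updated distance $\level_{w'}+1$ is correct. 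Constraints C1 and C2 are re-verified exactly as before, now against $T'$, so the invariant survives the switch and, by Lemma~\ref{lem:lemmedesprunes}(1), no node rejects at any moment.

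Finally, the \emph{relabeling phase} un-prunes the labels back to the intact legal redundant labeling of $T'$: sizes are recomputed upward from $w$ and from $w'$ to $r$ (each node summing its current children's sizes, which are available because children are processed before parents), and distances are recomputed downward through $T_v$ (parents before children). Processing in these orders keeps C1 and C2 satisfied at every intermediate state, so the invariant is maintained until all labels are intact; at that point no guard is enabled, which proves silentness and that $\AlgoA$ has correctly executed the switch task, stabilizing on the legal labeling of $T'$. For the round complexity, each of the constantly many waves propagates monotonically along a root path (length $\le n-1$) or down the subtree $T_v$ ($\le n$ nodes), and under the weakly fair scheduler advances by at least one node per round, so each wave finishes in $O(n)$ rounds; summing over the three phases yields the claimed $O(n)$ bound.
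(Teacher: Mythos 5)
Your proposal is correct and takes essentially the same route as the paper's proof: a phase-by-phase check that the labels always form a valid pruning (constraints C1 and C2, hence acceptance via Lemma~\ref{lem:lemmedesprunes} and loop-freedom), consistency of $v$'s label at the switch moment, and an $O(n)$ round bound from the constantly many waves advancing monotonically under the weakly fair scheduler. The only difference is presentational: you state the tree-plus-valid-pruning invariant explicitly and verify in detail that after the re-parenting the retained label entries are still correct for the new tree $T'$ (the stale entries being exactly the pruned ones on $P_{r,w}$, $P_{r,w'}$, and in $T_v$), a point the paper's proof leaves largely implicit by checking only $v$'s own label at the switch.
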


\begin{proof}
The modifications of the labels performed during the pruning phase results in a legal pruned labeling of the tree~$T$. Indeed, the updates performed downward along the path $P_{r,w}$, as well as those performed along the path $P_{r,w'}$, satisfy C1. Similarly, the updates performed downward in the subtree $T_v$ of $v$ satisfy C2. During the switch phase, the label of $v$ is modified. However, its distance remains consistent with its new parent $w'$, and its size is not modified  (and thus remains consistent with its children). Finally, the reconstruction of the labels performed during the relabeling  phase always guarantees C1 and C2. So overall, Algorithm $\AlgoA$  is loop-free, and silent. 

The pruning phase and the relabeling phase take a number of rounds at most 
\begin{equation}\label{eq:roundswitch}
O(\max\{|T_v|,|P_{r,w}|,|P_{r,w'}|\}).
\end{equation} 
Indeed, as the scheduler is weakly fair, the pruning  and the relabeling  performs essentially concurrently in $T_v$, $P_{r,w}$, and $P_{r,w'}$. Moreover, the switch phase performs in $O(1)$ rounds. Hence, in total, Algorithm $\AlgoA$ completes in $O(n)$ rounds. 
\end{proof}

We provide below an implementation of Algorithm $\AlgoA$. In addition to the variables used for the implementation of Algorithm~ $\AlgoT$, we use the following variables. 
\begin{smallitemize}
\item $\size_v \in \mathbb{N}$ is the number of nodes in a sub-tree $T_v$ rooted at $v$, including $v$.
\item $\switch_v \in \mathbb{N}$ is the identity of the new parent of $v$.
\item $\up_v$ is a boolean indicating the direction of the ``waves''.  
\end{smallitemize}
The implementation also uses the function $\Tsize$, defined by $\Tsize(v)= 1+\sum_{u\in \footnotesize\ch(v)}\size_u$ at every node~$v$. The error  predicate used in $\AlgoA$ is an extension of the one used in $\AlgoT$: 
\begin{eqnarray*}
\TRoot(v)&: & (\Root_v = \ID(v)) \wedge (\p_v = \bot) \wedge (\level_v=0) \wedge  (\size_v= 1+\mbox{$\sum_{u\in \footnotesize\ch(v)}\size_u$}) \\
\TNode(v)& : &  (\p_v  \in N(v)) \wedge (\Root_v=\Root_{\p_v}) \wedge [(\level_v = \bot)\vee (\level_{\p_v}=\bot) \vee  (\level_v=\level_{\p_v}+1) ]  \\
&&\wedge  [(\size_v = \bot)\vee (\exists u\in \ch(v):\size_u=\bot) \vee  (\size_v= 1+\mbox{$\sum_{u\in \footnotesize\ch(v)}\size_u$}) ] \\
&&\wedge \; (\level_v,\size_v)\neq(\bot,\bot) \\
&&\wedge \;[(\level_v,\size_v)\neq(d,\bot) \vee (\level_{\p_v},\size_{\p_v})\neq(d,s)]\\
&&\wedge \;[(\level_v,\size_v)\neq(d,\bot) \vee (\level_{\p_v},\size_{\p_v})\neq(\bot,s)]\\
&&\wedge \; [(\level_v,\size_v)\neq(\bot,s) \vee (\level_{\p_v},\size_{\p_v})\neq(d,s)]\\
\ErL(v) & : & \big( \neg\TRoot(v) \wedge   \neg \TNode(v) \big) \vee \big ((\up_v=false) \wedge (\up_{\p_v}=true)\big)
\end{eqnarray*}
Finally, the algorithm makes use of the following boolean predicates on the local variables of every node~$v$: 
\begin{eqnarray*}
\TDown(v,X)&:& (X_v=\bot) \wedge (X_{\p_v}\neq \bot)\\
\TUp(v,X)&:& (X_v=\bot) \wedge (\forall u\in \ch(v), X_u\neq \bot)\\
\UpWave(v)&:& (\exists u \in \ch(v)\mid \switch_u\neq \bot) \text{ or }  (\exists u \in N(v)\mid \switch_u=v ) \\
& & \text{ or }  (\exists u \in \ch(v)\mid \up_u=true ) 
\end{eqnarray*}
The implementation of Algorithm $\AlgoA$ is described in Algorithm~\ref{alg:algoA}. In this implementation, ``Alarm'' stands for the procedure that is launched in case an error is detected. Roughly, the algorithm returns to $\AlgoT$, and resets all variables specific to $\AlgoA$ ($\size_v$, $\switch_v$, and $\up_v$ to $\bot$).

\begin{algorithm}[htb]
\caption{Implementation of algorithm $\AlgoA$}
\label{alg:algoA}
{\small
\[
\begin{array}{lcllll}
\R_{\tt Error}\hspace*{-0,3cm}&:&\ErL(v)  & \rightarrow& \text{Alarm};\\
\R_{\tt CleanD}\hspace*{-0,3cm}&:&\hspace*{-0,3cm}\neg \ErL(v)\wedge (\level_v\neq\bot) \wedge (\switch_v=\bot) \wedge[(\switch_{\p_w}\neq\bot)\vee(\level_{\p_v}=\bot)]  & \rightarrow& \level_v:=\bot;\\
\R_{\tt Up}\hspace*{-0,3cm}&:&\hspace*{-0,3cm}\neg \ErL(v) \wedge (\size_v\neq\bot) \wedge (\up=false) \wedge \UpWave(v)& \rightarrow& \up:=true;\\
\R_{\tt CleanS}\hspace*{-0,3cm}&:&\hspace*{-0,3cm}\neg \ErL(v)\wedge(\size_v\neq \bot) \wedge (\up_v=true) \wedge [\TRoot(v) \vee (\size_{\p_v}=\bot)&\rightarrow &\size_v:=\bot;\\
&&& &\up_v:=false;\\
\R_{\tt Switch}\hspace*{-0,3cm}&:&\hspace*{-0,3cm}\neg \ErL(v)\wedge(\switch_v\neq\bot)\wedge (\size_{\p_v}=\bot) \wedge  (\size_{\switch_v}=\bot)\wedge \TUp(v,\level)&\rightarrow &\p_v:=\switch_v;\\
&&& &\level_v:=\level_{\switch_v}+1;\\
&&& &\switch_v:=\bot;\\
\R_{\tt Dist}\hspace*{-0,3cm}&:&\hspace*{-0,3cm}\neg \ErL(v)\wedge \TDown(v,L) \wedge (\forall u\in \ch(v),\level_u=\bot) \wedge (\switch_{\p_v}=\bot) & \rightarrow& \level_v:=\level_{\p_v}+1;\\
\R_{\tt Size}\hspace*{-0,3cm}&:&\hspace*{-0,3cm}\neg \ErL(v)\wedge \TUp(v,\size) & \rightarrow& \size_v:=\Tsize(v);\\
\end{array}
\]
}
\end{algorithm}

\subsubsection{The permutation algorithm}

One basic ingredient in Algorithm~\ref{algo:FR} is to remove an edge pending down from a node of high degree, and to reconnect the disconnect spanning tree to another node (of smaller degree) somewhere else in the tree. Let $P$ a simple path in the tree with a non-tree edge $e$ connecting its two extremities. Let $C$ be the fundamental cycle created by adding $e$ to $P$, and let $f\neq e$ be another edge of $C$ (see Fig.~\ref{fig:pruned-iterated}). We aim at replacing $f$ by $e$ in $T$. For this purpose, we perform a sequence of local switches. 

Let $e=\{v,w'\}$ and assume, w.l.o.g., that $f$ is on the simple path connecting $v$ and the nearest common ancestor of $v$ and $w'$.  (See Fig.~\ref{fig:pruned-iterated}). Then let $w$ be the upmost extremity of $f$, and let $w=u_0,u_1,\dots,u_{k-1},u_k=v$ be the simple path in $T$ from $w$ to $v$.  In order to replace $f$ by $e$ in $T$, it is sufficient to perform a sequence of local switches: between $\{v,u_{k-1}\}$ and $e=\{v,w'\}$, then between $\{u_{k-1},u_{k-2}\}$ and $\{u_{k-1},v\}$, and so on by performing successively the switches between $\{u_{i},u_{i-1}\}$ and $\{u_{i},u_{i+1}\}$ until the last switch $\{u_{1},w\}$ and $\{u_{1},u_{2}\}$ which eventually remove $f$, and completes the operation.

\begin{figure}[tb]
\centerline{\includegraphics[width=11cm]{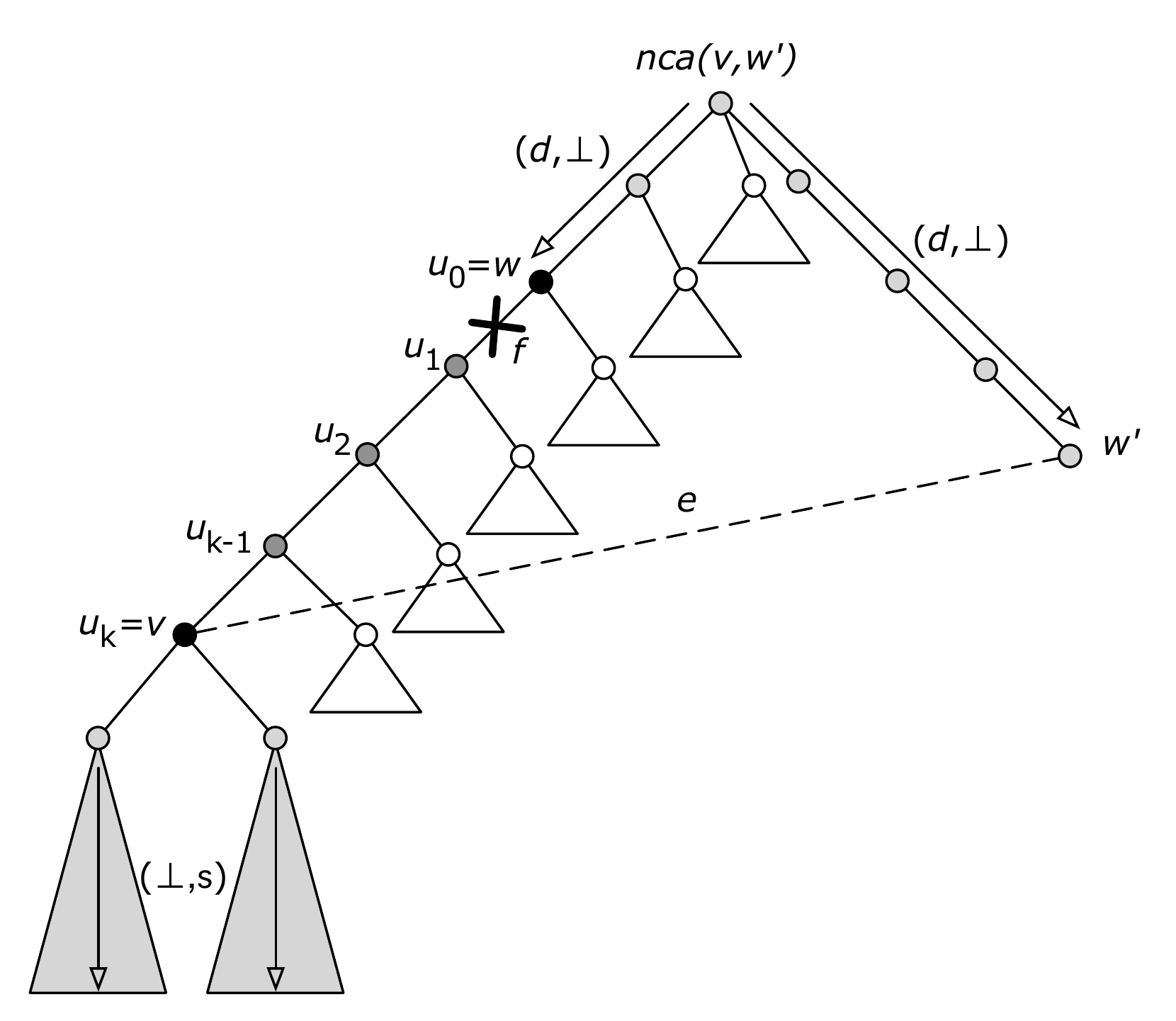}}
\caption{\sl The edge $f$ incident to $w$ on the path to $v$ should be permuted with the edge $e=\{v,w'\}$.}
\label{fig:pruned-iterated}
\end{figure}

The algorithm executing the sequence of switches as above is called $\AlgoP$. Interestingly, although Algorithm $\AlgoP$ may execute an arbitrary large sequence of Algorithms $\AlgoA$, its complexity remains $O(n)$ rounds. Indeed, we first observe that the pruning corresponding to changing labels of the form $(d,s)$ into ones of the form $(\bot,s)$ occurs in different subtrees of nodes $u_0,u_1,\dots,u_k$. Since the total size of all these subtrees cannot exceed $n$, these trees will be pruned and relabeled in $O(n)$ rounds. On the other hand, the pruning for $(d,s)$ to $(d,\bot)$ can be performed only once during Algorithm $\AlgoP$, with the relabeling starting upward only after $f$ has eventually be removed. 

The following result summarizes the content of the section.  

\begin{lemma}
\label{lem:loop-free2}
Algorithm $\AlgoP$  executes the permutation task in $O(n)$ rounds. It is silent, loop-free, and uses $O(\log n)$ bits of memory at each node. 
\end{lemma}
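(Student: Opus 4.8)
The plan is to build on Lemma~\ref{lem:loop-free}, which already establishes that a single switch (Algorithm $\AlgoA$) is silent, loop-free, and completes in $O(n)$ rounds using $O(\log n)$ bits. Algorithm $\AlgoP$ is by construction just the sequence of local switches described in the text: starting from the edge $\{v,u_{k-1}\}$ permuted with $e=\{v,w'\}$, then $\{u_{k-1},u_{k-2}\}$ with $\{u_{k-1},v\}$, and so on up to the switch involving $w=u_0$ that finally removes $f$. So the correctness portion (silentness and loop-freeness) follows almost immediately: each individual switch preserves the tree property and a legal pruned labeling by Lemma~\ref{lem:loop-free}, hence the whole composition never creates a loop and terminates in a correctly relabeled tree $T'=T-f+e$. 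The memory bound is inherited verbatim since $\AlgoP$ uses no variables beyond those of $\AlgoA$. The real content of the lemma is therefore the round complexity: a naive bound would give $O(n)$ rounds per switch times $k=O(n)$ switches, i.e. $O(n^2)$, and I must argue why the total is only $O(n)$.

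First I would recall the cost formula~(\ref{eq:roundswitch}) for a single switch, namely $O(\max\{|T_v|,|P_{r,w}|,|P_{r,w'}|\})$, and decompose it according to which of the two kinds of pruning dominates. The key structural observation, already hinted at in the paragraph preceding the lemma, is that the two prunings behave very differently across the sequence. The pruning that turns labels $(d,s)$ into $(\bot,s)$ happens in the subtree hanging below the node being reconnected at each step. As the switches march down the path $u_0,\dots,u_k$, these subtrees are (essentially) the disjoint subtrees of the $u_i$ obtained after detaching the path, so their total size is at most $n$; hence the cumulative cost of all the size-prunings and the subsequent distance-relabelings over the whole sequence is $O(n)$, not $O(n)$ per step. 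The pruning that turns $(d,s)$ into $(d,\bot)$ runs upward along the paths to the root $r$; the crucial point is that this upward pruning need be performed only \emph{once} for the entire execution of $\AlgoP$, and the corresponding upward relabeling is deferred until after $f$ has finally been removed. Thus the $(d,\bot)$-pruning contributes a single $O(|P_{r,\cdot}|)=O(n)$ term rather than one per switch.

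The step I expect to be the main obstacle is making the informal ``total size at most $n$'' and ``pruned only once'' claims precise, because the switches overlap in time: under the weakly-fair scheduler the waves for consecutive switches run concurrently rather than sequentially. Concretely, I would need to verify that when switch $i$ is in progress, the $(\bot,s)$-pruning it initiates is confined to the subtree rooted at $u_i$ \emph{after} the earlier portion of the path has already been detached, so that these subtrees are genuinely disjoint and charge a total of $O(n)$; and I would need to check that the $(d,\bot)$ labels along $P_{r,w}$ and $P_{r,w'}$, once pruned, remain pruned throughout (constraints C1 and C2 of Lemma~\ref{lem:lemmedesprunes} guarantee this consistency is never violated mid-sequence), so no repeated upward pruning is triggered. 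Once these two accounting facts are established, the round complexity is the sum of an $O(n)$ term for all size-prunings/relabelings, an $O(n)$ term for the single distance-pruning, an $O(n)$ term for its final upward relabeling, and $O(1)$ rounds per switch for the $\R_{\tt Switch}$ step itself — but since the switch steps along a single path of length $k\le n$ also pipeline under concurrency, they contribute only $O(n)$ total. Summing gives the claimed $O(n)$ round bound, completing the proof.
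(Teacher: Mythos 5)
Your proposal is correct and takes essentially the same route as the paper: the paper's own justification (given in the paragraph preceding the lemma) consists of exactly your two accounting observations, namely that the $(d,s)\to(\bot,s)$ prunings occur in disjoint subtrees of $u_0,\dots,u_k$ of total size at most $n$, and that the $(d,s)\to(d,\bot)$ pruning is performed only once during $\AlgoP$, with the upward relabeling deferred until $f$ is finally removed. Your additional attention to the concurrency of overlapping waves and the pipelining of the switch steps only makes explicit what the paper leaves implicit.
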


\subsection{Silent NCA-labeling scheme construction} 
\label{subsec:tool3} 

Our third tool for implementing the algorithm by F\"urer and Raghavachari in a silent self-stabilizing manner using $O(\log n)$ bits of memory per node is the implementation of an informative-labeling scheme for nearest common ancestor (NCA) using   $O(\log n)$ bits of memory per node. We use the scheme described in~\cite{AGKR04}. 

This scheme is based on the heavy-path decomposition of the rooted tree similar to the one used in~\cite{HT84}. Recall that, in this decomposition, each node of $T$ is classified as either heavy or light. The root is light. For each internal node $v$, let $w$ be a child of $v$ such that  $|T_w| = \max\{ |T_u| | u\in\ch(v)\}$,  and classify $w$ as heavy. Each of the remaining children of $v$ are classified as light. An edge connecting a heavy node to its parent is a heavy edge. All the other edges are light. A heavy path is a maximal path composed of heavy edges. Finally, given a node $v$, $\apex(v)$ denotes the nearest ancestor of $v$ which is light (possibly $v$ itself if $v$ is light). Note that the number of light edges traversed by the shortest path from the root to any node $v$ is at most $O(\log n)$. Hence, the heavy-path decomposition provides a NCA-labeling scheme using $O(\log^2n)$ bits labels (see~\cite{P05}). In this scheme, the label $L(v)$ of node $v$ is defined as 
\[
L(v)=(L(p(\apex(v))),\ID(\apex(v)),\dist(v,\apex(v))).
\]
Given the labels $L(u)$ and $L(v)$ of two nodes $u$ and $v$, the label of the nearest common ancestor $\nca(u,v)$ of $u$ and $v$ is essentially  the longest common prefix of $L(u)$ and $L(v)$. This latter scheme uses labels on $O(\log^2n)$ bits mainly because it stores  node identities in the labels, and encodes distances.  To reduce the label size down to $O(\log n)$ bits, the scheme in \cite{AGKR04} uses \emph{alphabetic codes}~\cite{GM59}, to replace both the distances along the heavy path, and the identities of the apexes. 

Let $S=(n_1,n_2,\dots,n_k)$ be a sequence of positive integers. For $i=1,\dots,k$, let $s_i=\sum_{j=1}^{i}n_j$ and $\nu_i=\lfloor \log_2 n_i\rfloor$. Let $s_0=0$. The code of $S$ uses, for each interval $[s_{i-1},s_i)$, the smallest multiple $z_i$ of $2^{\nu_i}$ which is not smaller than  $s_{i-1}$. That is, 
\[
z_i=\left\{\begin{array}{ll}
s_{i-1} & \mbox{if $s_i \equiv 0 \pmod {2^{\nu_i}}$} \\
s_{i-1}+2^{\nu_i}-(s_{i-1} \bmod 2^{\nu_i}) & \mbox{otherwise}
\end{array}\right.
\]
Let $w_i$ be the binary word on $\lceil \log_2 n\rceil$ bits encoding $z_i$, potentially with an additional  sequence of leading $0$'s. Since $z_i$ is a multiple of $2^{\nu_i}$, the word $w_i$ has at least $\nu_i$ least significant bits equal to~$0$. Thus, let $c_i$ be the binary word resulting from removing the $\nu_i$ least significant bits from $w_i$. The sequence $(c_1,c_2,\dots,c_k)$ is the alphabetic codes of $S$. An important property of this alphabetic code is that $0.c_1 < 0.c_2 < \dots < 0.c_k$~\cite{GM59}.

Let $\lightsize(v)$ be the light-size of $v$, defined as $\lightsize(v)=|T_v|-|T_w|$ where $w$ is the heavy child of $v$. Let $v_1,v_2,\dots,v_k$ be a heavy path, where $v_1$ is the apex. Let $(x_1,x_2,\dots,x_k)$ be the sequence of integers defined by $x_i=\lightsize(v_i)$. Let $(c_1,c_2,\dots,c_k)$ be the alphabetic code of this sequence. The string $c_i$ is called the heavy-label  of $v_i$, denoted by $\mbox{hlabel}(v_i)$. The labeling stores $\mbox{hlabel}(v_i)$ in $L(v_i)$ instead of $\dist(v_i,v_1)$. Similarly, let  $u_1,u_2,\dots,u_k$ be the light children of a heavy node $v$, ordered such that $\ID(u_1)<\ID(u_2)<\dots<\ID(u_k)$.  Let $(y_1,y_2,\dots,y_k)$ be the sequence of integers defined by $y_i=|T_{v_i}|$, and let $(b_1,b_2,\dots,b_k)$ be the alphabetic code of this sequence. The string $b_i$ is called the light-label  of $u_i$, denoted by $\mbox{llabel}(u_i)$. The labeling stores $\mbox{llabel}(u_i)$ in $L(u_i)$ instead of $\ID(v_i)$. The overall label of a node is thus: 
\[
L(v)=(L(p(\apex(v))),\mbox{llabel}(\apex(v)),\mbox{hlabel}(v)).
\]
Using the properties of alphabetic code, one can show that the labels produced this way are on $O(\log n)$ bits (see~\cite{AGKR04}). As for the scheme on $O(\log^2n)$ bits, given the labels $L(u)$ and $L(v)$ of two nodes $u$ and $v$, the label of the nearest common ancestor $\nca(u,v)$ of $u$ and $v$ is essentially  the longest common prefix of $L(u)$ and $L(v)$.

The NCA-labeling scheme above can be constructed distributedly using $O(\log n)$ bits of memory per node, as shown in~\cite{AGKR04}. This construction can easily be made self-stabilizing. To be silent, every node $v$ must however stores additional information to prove the correctness of the NCA-labeling. 

\begin{lemma}
\label{lem:proof-of-informative}
There is a proof-labeling scheme for the NCA-labeling of~\cite{AGKR04} which uses labels on $O(\log n)$ bits per node.  The construction of the NCA-labeling, as well as of its proof can be done in $O(n)$ rounds. 
\end{lemma}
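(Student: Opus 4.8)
The plan is to build the certification in layers on top of the primitives already established, and to certify each layer by purely local (edge-by-edge) consistency checks, so that the whole verification predicate reads only a node's own augmented label and those of its neighbours. First I would reuse Lemma~\ref{lem:election} to certify that the parent pointers $\{(v,p(v))\}$ form a spanning tree rooted at a common root, and reuse the size-based scheme to certify the subtree sizes $s_v=|T_v|$ via the local recurrence $s_v=1+\sum_{u\in\ch(v)}s_u$. Given certified subtree sizes, the heavy/light classification is itself certifiable locally: each node carries one bit, the root checks that it is light, and each internal node $v$ checks (reading all its children) that exactly one child is marked heavy and that this child maximizes $|T_u|$ over $\ch(v)$ under a fixed tie-break (e.g.\ by $\ID$). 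The value $\lightsize(v)=|T_v|-|T_w|$ (with $w$ the heavy child, and $\lightsize(v)=|T_v|$ if $v$ has none) is then a local function of already-certified sizes, and each node certifies its stored $\apex(v)$ by the edge rule ``$v$ light $\Rightarrow \apex(v)=v$, and $v$ heavy $\Rightarrow \apex(v)=\apex(p(v))$''. All of these fit in $O(\log n)$ bits.

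Second --- and this is where the real work lies --- I would certify the \emph{alphabetic codes} of~\cite{GM59} underlying $\mbox{hlabel}$ and $\mbox{llabel}$. The obstruction is that the $i$-th code depends on the prefix sum $s_{i-1}=\sum_{j<i}n_j$, so local verification forces each node to carry the relevant prefix sum. For $\mbox{hlabel}$ the sequence runs \emph{along a heavy path}, whose nodes are consecutive in the tree; I would therefore store at each heavy-path node $v_i$ the cumulative light-size $s_{i-1}$, check the recurrence $s_i=s_{i-1}+\lightsize(v_i)$ against its heavy-path neighbours, and check that $\mbox{hlabel}(v_i)$ is exactly the code derived from $s_{i-1}$ and $\nu_i=\lfloor\log_2 \lightsize(v_i)\rfloor$ by the rule defining $z_i$ and $c_i$. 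For $\mbox{llabel}$ the sequence runs over the light children of a node, which are \emph{not} mutually adjacent, so prefix sums cannot be threaded from one node to the next. I resolve this by delegating the check to the common parent: a node $u$ reads all its children, sorts the light ones by $\ID$, recomputes the whole alphabetic code of their subtree sizes $y_i=|T_{u_i}|$ in polynomial time, and rejects unless each light child's stored $\mbox{llabel}$ matches. Thus $\mbox{llabel}$ correctness is verified on the parent side with no extra prefix-sum storage, and both $\mbox{hlabel}$ and $\mbox{llabel}$ are certified with $O(\log n)$-bit labels and $O(poly(n))$ verification time.

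Third I would certify the recursive NCA label $L(v)=(L(p(\apex(v))),\mbox{llabel}(\apex(v)),\mbox{hlabel}(v))$ itself, again by edge-local checks, since $p(\apex(v))$ need not be a neighbour of $v$. The key observation is that $L$ changes in a controlled way across a single tree edge: if $v$ is heavy then $\apex(v)=\apex(p(v))$, so $L(v)$ and $L(p(v))$ share all but their last component, and $v$ need only check that this shared prefix equals the prefix of $L(p(v))$ and that $\mbox{hlabel}(v)$ is the next code after $\mbox{hlabel}(p(v))$ on the heavy path; if $v$ is light then $\apex(v)=v$ and $p(\apex(v))=p(v)$, so $v$ checks directly that $L(v)=(L(p(v)),\mbox{llabel}(v),\mbox{hlabel}(v))$ by reading $L(p(v))$ from its (neighbouring) parent. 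Soundness follows because any violation of tree-ness is already caught by the size-based layer, while any corrupted code or mis-threaded prefix sum is caught either by the local recurrence or by the parent-side recomputation; the correctness of the scheme and the $O(\log n)$ bound on $|L(v)|$ are exactly those established in~\cite{AGKR04}.

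For the construction and its $O(n)$-round complexity I would pipeline a constant number of self-stabilizing waves on the tree produced by $\AlgoT$: a bottom-up wave computing the $s_v$, local computation of the heavy/light bits and the light-sizes, a top-down wave fixing $\apex$ and the cumulative light-sizes $s_{i-1}$ along heavy paths, local derivation of the alphabetic codes, and a final top-down wave assembling the $L(v)$ (each child extending its parent's label). Each wave stabilizes in $O(\mbox{height})=O(n)$ rounds, and the certification predicate above guarantees silence once the labels are correct. The main obstacle, as flagged, is certifying the prefix-sum-based alphabetic codes locally --- in particular the $\mbox{llabel}$ case over non-adjacent siblings --- which the parent-side recomputation trick handles without inflating the label beyond $O(\log n)$ bits.
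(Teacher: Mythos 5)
Your proposal is correct and follows essentially the same route as the paper's proof: you certify $|T_v|$ and $\lightsize(v)$ by local recurrences, thread the prefix sums $s_{i-1}$ (the paper's $\mbox{ssize}(v_i)$) along heavy paths so that each heavy node can recompute its own $\mbox{hlabel}$, and delegate verification of $\mbox{llabel}$ to the common parent --- which is exactly the paper's device for handling the non-adjacent light siblings --- while checking the recursive structure of $L(v)$ edge-by-edge against the parent. Your extra details (the heavy bit with an explicit tie-break, the $\apex$ edge rule, and the pipelined waves for the $O(n)$-round construction) merely spell out steps the paper states summarily.
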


\begin{proof}
All nodes stores $\lceil \log_2 n\rceil$ where $n$ is the size of the tree $T$. Each node $v$ additionally stores $|T_v|$ and $\lightsize(v)$. Let $v_1,v_2,\dots,v_k$ be a heavy path, where $v_1$ is the apex. We define $\mbox{ssize}(v_i)=\sum_{j=1}^{i-1}\lightsize(v_j)$, and let $\mbox{hchild}(v_i)=\ID(v_{i+1})$ ($\mbox{hchild}(v_k)=\bot)$. Observe that verifying the consistencies of all these variables at each node can be easily achieved by checking the registers of the neighboring nodes in $T$. 

To check the correctness of the NCA-labeling, each heavy node $v$ first checks that its label $L(v)$ is identical to the one of its parent, but the last field, $\mbox{hlabel}(v)$. To check the correctness of this latter field, node $v$ just recomputes its heavy label using all the aforementioned additional information. (In particular, given  $\mbox{ssize}(v)$ and $\lightsize(v)$, node $v$ can compute the partial sums $s_{i-1}$ and $s_i$ used to set up the alphabetic code). 

Checking the correctness of the NCA-labeling at a light node $v$ (i.e., at an apex) is slightly more complex because computing $\mbox{llabel}(v)$ requires to know information that are not necessary available on the neighboring nodes. These information are however available at distance~2, on the sibling nodes. Therefore, the parent $p(v)$ of $v$ can do the checking for $v$. More precisely, a node $u$ having $v_1,v_2,\dots,v_k$ as light children (ordered by identities) recomputes sequentially the light label $\mbox{llabel}(v_i)$ of each of its light child $v_i$, and rises an alarm if one of these light labels was not appearing in the label of $L(v_i)$. Beside this, node $v_i$ just checks that  its label $L(v_i)$ is a suffix of $L(u)$. 
\end{proof}

\subsection{Silent FR-tree construction} 
\label{subsec:tool4}

We have now all the ingredients to describe our silent self-stabilizing algorithm for the minimum-degree spanning tree problem using $O(\log n)$ bits of memory per node, and stabilizing in a polynomial number of rounds. 
Recall that these ingredients are a silent self-stabilizing algorithm for constructing an arbitrary spanning tree rooted at the node with minimum identity (cf. Section~\ref{subsec:tool1}), a silent loop-free algorithm for switching between two spanning trees (cf. Section~\ref{subsec:tool2}), and a silent construction of a labeling scheme for nearest-common ancestor  (cf. Section~\ref{subsec:tool3}). All these algorithms use registers of $O(\log n)$ bits at each node, and stabilize in $O(n)$ rounds. Using these ingredients, our algorithm for constructing a minimum-degree spanning tree is a distributed self-stabilizing implementation of the sequential algorithm by F\"urer and Raghavachari~\cite{FR94} (see Algorithm~\ref{algo:FR}). This implementation presents no more difficulties given our three precious ingredients. 

The first instruction of Algorithm~\ref{algo:FR}, i.e., constructing a spanning tree $T$ of the current graph~$G$, is achieved using our silent self-stabilizing algorithm, $\AlgoT$,  for constructing an arbitrary spanning tree. At any point in time during the execution of the algorithm, computing the maximum degree $k$ of the nodes in the current tree is achieved by a convergecast to the root, and computing the status of the nodes (good or bad) as well as the identities of the different fragments is achieved by a divergecast from the root (each fragment in the tree is identified by the identity of the fragment's node which is closest to the root). The test of Instruction~\ref{ins:9} is achieved by a convergecast in the tree which collects the number of degree-$k$ good nodes at the root. The root  then decides that the algorithm terminates if this number is zero. By the definition of a FR-tree (cf. Definition~\ref{def:FR-tree}), when the algorithm terminates, its degree is at most $\opt+1$. So, from now on, we describe the self-stabilizing implementation of Algorithm~\ref{algo:FR} by focussing on the search for improvements in the while-loop (cf. Instructions~\ref{ins:5}-\ref{ins:8}), and the update of the spanning tree (cf. Instructions~\ref{ins:10}-\ref{ins:14}). 

\subsubsection{Search for a sequence of improvements} 

The search for improvements is  using the self-stabilizing implementation of the NCA-labeling scheme of~\cite{AGKR04} described in Section~\ref{subsec:tool3}. Any cycle composed of a path in a current tree $T$ plus a non-tree edge $e=\{u,v\}$ connecting the two extremities $u$ and $v$ of that path is encoded as $(L(u),L(v),L(w))$ where $w=\nca(u,v)$ and $L(\cdot)$ is the node labeling of ~\cite{AGKR04}. At any execution of the while-loop, every good node $u$ identifies its non-tree edges incident to good nodes in different fragments, and chooses one of them arbitrarily (e.g., by choosing the edge $\{u,v\}$ such that $\nca(u,v)$ is closest to the root, which can be done based solely on the label of $\nca(u,v)$, which is in turn computable based only on $L(u)$ and $L(v)$). Once this edge has been identified, the good node $u$ stores the cycle corresponding to that edge. The selection of a unique edge among all non-tree edges selected by the nodes is performed via a convergecast where each internal node in the tree proceeds as follows. Given the cycles 
\[
\big (L(u_1),L(v_1),L(w_1)\big ),\dots, \big (L(u_d),L(v_d),L(w_d)\big)
\]
 of the $d$ children of a node $u$, and given the cycle  $(L(u_0),L(v_0),L(w_0))$ of $u$ itself, node $u$ selects the one, say $(L(u_i),L(v_i),L(w_i))$, such that $w_i$ is closest to the root. 

Once the root has selected a cycle $(L(u),L(v),L(w))$, a divergecast starts whose role is to switch the bad nodes on that cycle to good. Note that not only the NCA-labeling scheme enables to compute the nearest common ancestor of any pair of nodes, it also enables to perform routing. Thus, all bad nodes on the path from $w$ to $u$, as well as on the path from $w$ to $v$, switch to good. In addition, those nodes store the pair $(L(u),L(v))$, since the corresponding edge $\{u,v\}$ will potentially  be used from decreasing their degrees. 

Finally, a convergecast is performed in order to detect whether at least one bad node of degree~$k$ has switched to good. If this is the case, the algorithm updates the spanning tree as described hereafter. Otherwise, another loop is initiated. This proceeds until either one bad node of degree~$k$ has switched to good, or no improving edges have been detected (i.e., no cycles have been received by the root after the convergecast of cycles). In this latter case, as said before, the root terminates. Hence, we now focus on the way to update the current spanning tree in order to decrease the degree of at least one node of degree $k$. 

\subsubsection{Updating the spanning tree}

We are here in the situation in which at least one node of  degree $k$ has switched from bad to good. In case there are more than one such nodes,  the root can select one of them. So, from now on, we are concerned with improving the degree of a unique node of degree $k$.  Recall from the above that any bad node $x$ that switches to good keeps in its register a pair $(L(u),L(v))$ of labels corresponding to an edge $\{u,v\}$ which can be used for decreasing the degree of $x$. More specifically, assume, w.l.o.g., that $u$ is a descendent of $x$ (by construction, one of the two nodes $u$ and $v$ is a descendent of $x$). Then the tree 
\[
T' = T \setminus \{x,y\} \cup \{u,v\}
\]
where $y$ is the child of $x$ on the shortest path from $x$ to $u$ is another spanning tree of $G$ in which the degree of $x$ has decreased by~1. If $\deg_T(u)<k-1$ and $\deg_T(v)<k-1$, then the change from $T$ to $T'$ can be implemented using the algorithm $\AlgoP$  described in Section~\ref{subsec:tool2}. However, if one of these two nodes (or both)  has degree $k-1$, then this change from $T$ to $T'$ requires first to decrease the degree of those nodes with degree $k-1$. This phenomenon can actually repeats for $u$ and/or for $v$, and for the nodes which could help reducing their degrees. We are thus potentially  facing a sequence of improvements, which can actually be viewed as a binary tree $B$ rooted at $x$, where the at most two children of a node $z$ are the nodes yielding a degree-improvement for $z$. (See~\cite{FR94} for the absence of loops in the sequence of improvements). 

The binary tree $B$ is actually  stored distributedly at each node $z$ to be improved, since every such node precisely stores in its register the pair of labels corresponding to the extremities of the improving edge for $z$. Therefore, the sequence of improvements is executed in a postorder manner, starting from one leaf of $B$ until one can eventually improve the degree of $x$. (The postorder is provided directly by the structure of the distributed storage of $B$ at the nodes).  Each improvement in this sequence of improvements can be implemented using the algorithm $\AlgoP$ described in Section~\ref{subsec:tool2}.

Once the degree of the identified node of degree $k$ has been reduced from $k$ to $k-1$, the algorithm starts a new iteration of the repeat-loop. 

\subsection{Space and time complexity}

We are now ready to state our main result: 

\begin{theorem}
\label{theo:main}
There exists a silent self-stabilizing algorithm which constructs and stabilizes on FR-trees, a subclass of spanning trees with degree at most $\opt+1$. The algorithm uses an optimal memory of $O(\log n)$ bits per node, converges in a polynomial number of rounds, and performs polynomial-time computations at each node.  
\end{theorem}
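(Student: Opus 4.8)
The plan is to establish Theorem~\ref{theo:main} by exhibiting the algorithm sketched in Section~\ref{subsec:tool4} as a faithful silent self-stabilizing simulation of the sequential procedure of Algorithm~\ref{algo:FR}, and then to verify separately the four required properties: (i) the stable states are exactly FR-trees, (ii) silentness, (iii) optimal $O(\log n)$ space, and (iv) polynomial round- and computation-complexity. Correctness of the output follows once each instruction of Algorithm~\ref{algo:FR} is shown to be correctly realized: Instruction~\ref{ins:1} by Lemma~\ref{lem:election}, the identification of the fundamental cycle in Instruction~\ref{ins:6} by the self-stabilizing NCA labeling of Section~\ref{subsec:tool3}, and the tree updates of Instruction~\ref{ins:13} by the loop-free permutation algorithm $\AlgoP$ (Lemma~\ref{lem:loop-free2}). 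Since the simulation halts precisely when the root detects (via the convergecast of Instruction~\ref{ins:9}) no good node of degree~$k$, Definition~\ref{def:FR-tree} guarantees that the stable tree is a FR-tree, and Theorem~2.2 of~\cite{FR94} then yields $\deg(T)\le\opt+1$.

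I would handle silentness and space together. First I would observe that every layer of the construction carries a proof-labeling scheme: the forest predicate $\ErT$ of $\AlgoT$, the redundant (pruned) tree certificate of Section~\ref{subsec:tool2}, the NCA proof-labeling of Lemma~\ref{lem:proof-of-informative}, and the FR-tree certificate of Lemma~\ref{lem:plsfrtree}. Composing these, each node can locally detect any inconsistency between its register and those of its neighbors; when no inconsistency is detected and the root sees no improvable node, no rule is enabled anywhere, so the configuration is frozen -- hence the algorithm is silent and stabilizes exactly on certified FR-trees. For space, each ingredient uses $O(\log n)$ bits (Lemmas~\ref{lem:election}, \ref{lem:loop-free2}, and~\ref{lem:proof-of-informative}), and the FR-tree certificate of Lemma~\ref{lem:plsfrtree} adds only $O(\log n)$ further bits; the matching $\Omega(\log n)$ lower bound of Lemma~\ref{lem:plsfrtree} makes this optimal.

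For the round and per-node time complexity I would argue by hierarchical (layered) convergence. From an arbitrary configuration, $\AlgoT$ first drives the system to a genuine spanning tree in $O(n)$ rounds (Lemma~\ref{lem:election}); on top of a valid tree, the NCA labels and their certificates stabilize in $O(n)$ further rounds (Lemma~\ref{lem:proof-of-informative}); the good/bad marking, the fragment identities, the current degree $k$, and the termination test are each obtained by a single convergecast/divergecast, i.e. $O(n)$ rounds. Each iteration of the repeat-loop then performs at most polynomially many iterations of the while-loop (each $O(n)$ rounds) followed by a sequence of permutations whose total cost is $O(n)$ rounds by Lemma~\ref{lem:loop-free2}. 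Since the number of outer iterations is bounded by the polynomial progress measure of~\cite{FR94} (each strictly reduces either the number of maximum-degree nodes or the degree itself), the whole execution takes polynomially many rounds; and every local computation -- computing $\nca$ from two labels, selecting the cycle whose $\nca$ is closest to the root, evaluating $\Tsize$, and recomputing the alphabetic codes -- is clearly polynomial in $n$.

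The main obstacle, and the place where genuine care is needed, is the self-stabilization of the \emph{intermediate} data from an arbitrary starting configuration. Corruption may place the system in a state where the spanning tree looks valid but the NCA labels, the cycle encodings $(L(u),L(v),L(w))$, or the distributed binary improvement tree are fabricated, potentially steering the simulation toward a non-FR tree or an infinite loop. The resolution is that each such object is itself certified by one of the above proof-labeling schemes, so a faulty value is locally detected and triggers the Alarm/reset mechanism (reverting to $\AlgoT$ and clearing the auxiliary variables), which re-establishes a clean tree before the simulation resumes. Making this rigorous requires showing that the composed predicate is a sound and complete local certificate for ``a correctly labeled FR-construction in progress'', so that no spurious stable state exists; this, together with the potential argument bounding the number of resets, is the crux of the correctness proof.
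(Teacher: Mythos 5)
Your proposal follows essentially the same route as the paper's own proof: correctness by reduction to the correctness of the F\"urer--Raghavachari algorithm together with Lemmas~\ref{lem:election}, \ref{lem:loop-free2}, and~\ref{lem:proof-of-informative}, space optimality from those lemmas plus the $\Omega(\log n)$ bound of Lemma~\ref{lem:plsfrtree}, and polynomial round complexity from the observation that the number of repeat-loop iterations is polynomially bounded (the paper bounds it by $n^2$, since the number of maximum-degree nodes decreases each iteration and $k$ takes at most $n-2$ values). Your closing paragraph honestly flags the composition of the proof-labeling layers and the reset mechanism as the remaining crux; the paper's proof leaves this at the same level of detail, delegating it to the ingredient lemmas, so no gap relative to the paper.
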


\begin{proof}
We prove that the algorithm described in the previous sections satisfies the statement of the theorem. The correctness of the algorithm, including the fact that it constructs and stabilizes on a minimum-degree spanning tree with degree at most $\opt+1$, follows directly from the correctness of Algorithm~\ref{algo:FR} established by F\"urer and Raghavachari in~\cite{FR94}, and from the correctness of each of the three ingredients used to implement it, established in Lemmas~\ref{lem:election}, \ref{lem:loop-free2}, and~\ref{lem:proof-of-informative}. The fact that it uses $O(\log n)$ bits of memory per node also follows from Lemmas~\ref{lem:election}, \ref{lem:loop-free2}, and~\ref{lem:proof-of-informative}, and from the fact that, apart from the variables used for the spanning tree construction, for the switches and permutations, and for the NCA labeling, the algorithm uses only variables on $O(\log n)$ bits (to store the maximum degree $k$ of the tree, the nature good or bad of the nodes, etc.). The fact that $\Omega(\log n)$ bits are requires follows from Lemma~\ref{lem:plsfrtree}. The amount of individual computation performed at each node is polynomial, including the NCA-labeling (cf.~\cite{AGKR04}). Even if the number of rounds performed for the spanning tree construction, for the permutations, and for the NCA labeling, amounts for $O(n)$ each, it remains that the overall number of rounds of the algorithm might be larger. Nevertheless, the number of iterations of the repeat-loop in  Algorithm~\ref{algo:FR} cannot exceed~$n^2$. Indeed, the number of nodes with maximum degree $k$ decreases by at least one at each iteration, and $k$ can take at most $n-2$ different values. Thus, overall, the number of rounds of our implementation is polynomial\footnote{Cf. Section~\ref{sec:conclusion} for a discussion about the round complexity of constructing minimum-degree spanning trees.}.
\end{proof}

\section{Conclusion} 
\label{sec:conclusion}

In this paper, we have designed a silent self-stabilizing algorithm for constructing spanning trees of degree within one from the optimal. The algorithm is converging in polynomial time and uses registers on $O(\log n)$ bits. It stabilizes on FR-trees, which is a subclass of spanning trees with degree $\leq \opt+1$. It would be interesting to determine whether FR-trees is the ultimate class of trees with degree $\leq \opt+1$ for which there is  an efficient self-stabilizing algorithm. We proved that, unless $\NP=\coNP$, we cannot expect designing a polynomial-time silent self-stabilizing algorithm that stabilizes on the class of all spanning trees with degree $\leq \opt+1$. Nevertheless, there might be a subclass of trees with degree $\leq \opt+1$ including FR-trees, for which an efficient silent self-stabilizing algorithm exists. Note that it may be desirable that this subclass was including all spanning trees of optimal degree. We let this question as an open problem. 

Another direction for further work is to figure out whether or not it is possible to design a silent self-stabilizing algorithms constructing and stabilizing on (some subclass of) spanning trees with degree at most $\opt+1$, performing in $O(n)$ rounds. This seems to be a non-trivial issue. In fact, it is not even clear whether it is possible to construct a minimum-degree spanning tree with degree at most $\opt+1$ in time $o(n^2)$ in \emph{synchronous} models such as the $\cal CONGEST$ model of~\cite{P00}. The  algorithm by F\"urer and Raghavachari~\cite{FR94} appears to be  inherently sequential, in the sense the degree of the current tree decreases sequentially. Moreover, it does not appear clear how to use efficiently the ability to perform  updates in parallel. Thus, the design of an efficient distributed algorithm may require to come up with a new algorithm, and/or to relax the performances (e.g., targeting a degree $O(\opt)$ or $\opt+O(\log n)$). 

\paragraph{Acknowledgement:} The second author is thankful to David Peleg for  informative discussions  about minimum-degree spanning tree construction in the $\cal CONGEST$ model.

\bibliographystyle{plain}

\end{document}